\newtheorem{theorem}{Theorem}
\newtheorem{claim}[theorem]{Claim}
\newtheorem{corollary}[theorem]{Corollary}
\newtheorem{lemma}[theorem]{Lemma}
\newtheorem{proposition}[theorem]{Proposition}
\newtheorem{conjecture}{Conjecture}
\newtheorem{oproblem}{Open Problem}
\theoremstyle{definition}
\newtheorem{definition}{Definition}
\newtheorem*{remark}{Remark}
\theoremstyle{plain}
\newcommand{\xor}{\textsc{xor}}
\newcommand{\maj}{\textsc{maj}}
\DeclareMathOperator*{\E}{\mathrm{E}}
\newcommand{\eps}{\varepsilon}
\DeclareMathOperator{\ind}{\mathbf{1}}
\DeclareMathOperator{\deq}{:=}
\newcommand{\bit}{\{0,1\}}
\newcommand{\bitsn}{\bit^n}
\newcommand{\bitsm}{\bit^m}
\newcommand{\bitz}{\{0,1,*\}}
\newcommand{\N}{{N}}
\newcommand{\cA}{\mathcal{A}}
\newcommand{\cB}{\mathcal{B}}
\newcommand{\cX}{\mathcal{X}}
\newcommand{\errordep}{(\frac{n}{\log n})^{1/3}}
\newcommand{\bluered}{\textsc{BlueRed}\xspace}
\newcommand{\gapID}{\textsc{GapID}\xspace}
\newcommand{\gapid}{\gapID}
\newcommand{\PtrFcn}{\textsc{PtrFcn}\xspace}
\newcommand{\EncFcn}{\textsc{EncFcn}\xspace}
\newcommand{\black}{\textsc{black}\xspace}
\newcommand{\blue}{\textsc{blue}\xspace}
\newcommand{\red}{\textsc{red}\xspace}
\newcommand{\valx}{\textsc{value}\xspace}
\newcommand{\val}{\textsc{value}\xspace}
\newcommand{\row}{\textsc{row}\xspace}
\newcommand{\col}{\textsc{col}\xspace}
\newcommand{\goos}{G{\"o\"o}s\xspace}
\newcommand{\ddt}{\mathrm{D}}
\newcommand{\rdt}{\mathrm{R}}
\newcommand{\rcc}{\mathrm{R}^{\mathrm{cc}}}
\newcommand{\rccow}{\mathrm{R}^{\mathrm{cc},\to}}
\newcommand{\rccpriv}{\mathrm{R}^{\mathrm{cc,priv}}}
\newcommand{\rdtde}{\rdt_{\delta,\eps}}
\title{Optimal Separation and Strong Direct Sum for Randomized Query Complexity}
\author{Eric Blais \\ University of Waterloo \\ \texttt{eric.blais@uwaterloo.ca}
   \and Joshua Brody \\ Swarthmore College \\ \texttt{brody@cs.swarthmore.edu}}
\begin{document}

\maketitle

\begin{abstract}
We establish two results regarding the query complexity of bounded-error randomized algorithms.
\begin{description}
\item[Bounded-error separation theorem.] There exists a total function
  $f : \{0,1\}^n \to \{0,1\}$ whose $\epsilon$-error randomized query
  complexity satisfies
  $\overline{\mathrm{R}}_\eps(f) = \Omega( \mathrm{R}(f) \cdot \log
  \frac1\epsilon)$.
\item[Strong direct sum theorem.] For every function $f$ and every
  $k \ge 2$, the randomized query complexity of computing $k$
  instances of $f$ simultaneously satisfies
  $\overline{\mathrm{R}}_\eps(f^k) = \Theta(k \cdot
  \overline{\mathrm{R}}_{\frac\epsilon k}(f))$.
\end{description}
As a consequence of our two main results, we obtain an optimal superlinear
direct-sum-type theorem for randomized query complexity: there exists a
function $f$ for which $\mathrm{R}(f^k) = \Theta( k \log k \cdot \mathrm{R}(f))$.
This answers an open question of Drucker (2012). Combining this result
with the query-to-communication complexity lifting theorem of
G\"o\"os, Pitassi, and Watson (2017), this also shows that there is a
total function whose public-coin randomized communication complexity
satisfies $\mathrm{R}^{\mathrm{cc}}(f^k) = \Theta( k \log k \cdot \mathrm{R}^{\mathrm{cc}}(f))$, answering a
question of Feder, Kushilevitz, Naor, and Nisan (1995).
\end{abstract}

\section{Introduction}

We consider two fundamental questions related to the query complexity
of functions in the bounded-error randomized setting: how the
randomized query complexity of total functions scales with the
allowable error $\epsilon$ (the \emph{separation} problem), and how
the query complexity of computing $k$ instances of a function scales
with the complexity of computing only 1 instance of the same function
(the \emph{direct sum} problem). Standard folklore arguments give
upper bounds on how much the randomized query complexity can depend on
$\epsilon$ and on $k$ in these two problems; the results described
below show that these well-known upper bounds are tight in general.

A randomized algorithm $\cA$ \emph{computes} a function
$f : \cX^n \to \bit$ over a finite set $\cX^n$ \emph{with error
  $\epsilon \ge 0$} if for every input $x \in \cX^n$, the algorithm
outputs the value $f(x)$ with probability at least $1-\epsilon$.  The
\emph{query cost} of $\cA$ is the maximum number of coordinates of $x$
that it queries, with the maximum taken over both the choice of input
$x$ and the internal randomness of $\cA$. The \emph{$\epsilon$-error
  (worst-case) randomized query complexity} of $f$ (also known as the
\emph{randomized decision tree complexity} of $f$) is the minimum
query complexity of an algorithm $\cA$ that computes $f$ with error at
most $\epsilon$. We denote this complexity by $\rdt_\epsilon(f)$, and
we write $\rdt(f) := \rdt_{\frac13}(f)$ to denote the $\frac13$-error
randomized query complexity of $f$.

Another natural measure for the query cost of a randomized algorithm
$\cA$ is the \emph{expected} number of coordinates of an input $x$
that it queries. Taking the maximum expected number of coordinates
queried by $\cA$ over all inputs yields the \emph{average query cost}
of $\cA$. The minimum average query complexity of an algorithm $\cA$
that computes a function $f$ with error at most $\epsilon$ is the
\emph{average $\epsilon$-error query complexity} of $f$, which we
denote by $\overline{\rdt}_{\epsilon}(f)$. We again write
$\overline{\rdt}(f) := \overline{\rdt}_{\frac 13}(f)$. Note that
$\overline{\rdt}_0(f)$ corresponds to the standard notion of
\emph{zero-error randomized query complexity} of $f$.

\subsection{Our Results}

\paragraph*{Bounded-Error Separation Theorem for Query Complexity }
One of the first tricks that one learns in the study of randomized
algorithm is \emph{success amplification}: it is possible to cheaply
reduce the error of a randomized algorithm from $\frac13$ to any
$\eps > 0$ by running the algorithm $O(\log \frac1\eps )$ times and
outputting the most frequent answer.  In the context of randomized
query complexity, this means that for every function
$f : \bitsn \rightarrow \bit$,
\begin{equation}
\label{eqn:separation-ub}
\rdt_\eps(f) = O\big(\rdt(f)\cdot \log \tfrac1\eps \big).
\end{equation}
When considering partial functions, it is easy to see that the success
amplification trick is optimal, as there are partial functions for
which this relationship is tight (see Section~\ref{sec:GapID}).
However, in the case of total functions, for many natural functions
such as the majority function, parity function, dictator function,
etc., the stronger bound $\rdt_\epsilon(f) = O\big( \rdt(f) \big)$
holds and until now it was not known whether the bound
in~\eqref{eqn:separation-ub} is tight for \emph{any} total function.
In fact, even separations between zero-error and $\frac13$-error
randomized query complexity were not known until very recently, when
Ambainis et al.~\cite{AmbainisBBLSS17} showed that there exists a
total function $f$ for which
$\overline{\rdt}_0(f) = \widetilde{\Omega}(\rdt(f)^2)$. Similarly,
other separations between randomized query complexity and other
measures of complexity have also only been established very
recently~\cite{MukhopadhyayS15,AaronsonBK16,AmbainisKK16,AnshuBBGJKLS16,AmbainisBBLSS17}.

In this work, we give the first separation within the bounded-error
randomized query complexity setting. Our separation shows that the
bound in~\eqref{eqn:separation-ub} is optimal in general.

\begin{theorem}
\label{thm:separation}
For infinitely many values of $n$ and every
$2^{-\errordep} < \eps \le \frac13$, there exists a total function
$f:\bitsn \rightarrow \bit$ with randomized query complexity
\[
%\rdt_\eps(f) \ge
\overline{\rdt}_\eps(f) = \Omega\big(\rdt(f) \cdot \log\tfrac{1}{\eps} \big).
\]
\end{theorem}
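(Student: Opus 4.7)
The plan is to build a total function $f$ by embedding the partial function $\gapID$ into a pointer-function framework, in the spirit of the pointer-function constructions that have driven recent query-complexity separations. Recall from Section~\ref{sec:GapID} that $\gapID$ on $m$ bits satisfies $\rdt(\gapID)=O(1)$ while $\rdt_\eps(\gapID)=\Theta(\log\tfrac1\eps)$; the challenge is to preserve this separation when passing to a total function.

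For the construction, I would describe $f$ on inputs that encode a grid of cells, each packaging a \bluered gadget (a small $\gapID$-style instance determining whether the cell is \blue, \red, or \black), a value bit \valx, and a pointer \ptr to another cell. A valid input encodes a chain of $k$ non-\black cells reachable from a designated root by following pointers, and $f(x)$ is the \valx bit at the end of the chain; on malformed inputs $f$ is extended to a total function by a default rule. The grid side $N$, chain length $k$, and gadget size $m$ would be tuned so that $n \approx N^2 m$, $\rdt(f)=O(k)$, and the separation holds across the full range of $\eps$ in the theorem statement (in particular, $m$ must be large enough to keep $\gapID$ hard down to $\eps = 2^{-(n/\log n)^{1/3}}$).

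For the upper bound $\rdt(f)=O(k)$, I would design an algorithm that discovers the chain by probing cells and spending only $O(1)$ queries on each \bluered gadget. Off-chain cells carry balanced gadgets, so a short test quickly distinguishes them from on-chain cells, and the pointer structure lets the algorithm verify partial guesses and backtrack upon detecting an inconsistency. With careful amortization the total query cost stays $O(k)$ at error $\tfrac13$, matching the target.

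The heart of the proof -- and the main obstacle -- is the matching lower bound $\overline{\rdt}_\eps(f) = \Omega(k\log\tfrac1\eps)$. Using Yao's principle I would reduce to a distributional lower bound over inputs in which the chain is planted uniformly at random and all off-chain cells carry perfectly balanced \bluered gadgets. The central claim is that any deterministic decision tree of average cost $T$ solving $f$ at error $\eps$ on this distribution must effectively resolve the \bluered gadget of every on-chain cell; combined with the $\Omega(\log\tfrac1\eps)$ lower bound for $\gapID$, a hybrid/chain-rule argument along the planted chain then forces $T = \Omega(k\log\tfrac1\eps)$. The delicate technical point is ruling out cheating strategies in which the algorithm uses information from one on-chain cell to save work on another, or gleans information about the chain's identity from off-chain queries. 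Handling these cross-cell correlations -- by a conditioning argument that fixes the algorithm's view on all cells but one and appeals to the hardness of a fresh $\gapID$ copy -- is where I expect the bulk of the technical effort to lie.
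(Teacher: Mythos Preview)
Your construction is quite different from the paper's, and as written it has a concrete gap that prevents it from yielding the tight $\Omega(\log\tfrac1\eps)$ separation.

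The paper does not embed a $\gapID$ gadget inside each cell. Instead it defines a GPW-style pointer function $\PtrFcn$ over a large alphabet $\Gamma$ (each cell holds a value bit, $m$ row pointers, and a column pointer), proves an upper bound $\rdt_\eps(\PtrFcn)=O(n\log\tfrac1\eps+m)$ directly, and proves the lower bound by a chain of \emph{reductions} $\gapID\to\bluered\to\PtrFcn$ rather than a hybrid argument. The crucial new ingredient---and the one missing from your plan---is the use of $t$-resilient functions to encode each cell. Composing $\PtrFcn$ with a resilient encoding $\phi:\{0,1\}^N\to\Gamma$ guarantees that the bit-query complexity of $\PtrFcn\circ\phi$ is $\Theta(N)$ times the cell-query complexity of $\PtrFcn$, so the upper and lower bounds match up to constants. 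This is precisely what allows the paper to eliminate the $\mathrm{polylog}(n)$ slack present in the Ambainis et al.\ construction.

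Your plan does not address this bit-versus-cell issue at all. In your chain-based function, following a pointer costs $\Theta(\log N)$ bit queries, so even if your constant-error algorithm uses $O(k)$ \emph{cell} operations, its bit-query cost is $\Omega(k\log N)$, i.e.\ $\rdt(f)=\Omega(k\log n)$. Meanwhile your lower bound argument only charges $\Omega(\log\tfrac1\eps)$ per on-chain $\gapID$ gadget and gives $\overline{\rdt}_\eps(f)=\Omega(k\log\tfrac1\eps)$ in bit queries. The resulting ratio is $\Omega(\log\tfrac1\eps/\log n)$, not $\Omega(\log\tfrac1\eps)$, so you recover only the Ambainis et al.\ type of bound and lose the separation entirely for $\eps\ge n^{-O(1)}$. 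A second, smaller issue: your $O(k)$ upper bound relies on off-chain cells carrying ``balanced gadgets,'' but $\rdt(f)$ is a worst-case quantity, so the algorithm must work on arbitrary (even adversarial) inputs, not just the planted distribution you use for the lower bound.
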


Note that by the trivial relation
$\overline{\rdt}_\eps(f) \le \rdt_\eps(f)$ between average and
worst-case randomized query complexity, Theorem~\ref{thm:separation}
implies the existence of a function $f$ for which
$\rdt_\eps(f) \ge \Omega\big(\rdt(f) \cdot \log\tfrac{1}{\eps} \big)$
and
$\overline{\rdt}_\eps(f) \ge \Omega\big(\overline{\rdt}(f) \cdot
\log\tfrac{1}{\eps} \big)$, giving optimal separations in both the
worst-case randomized query complexity and average query complexity
settings.

\paragraph*{Strong Direct Sum Theorem}
The \emph{direct sum problem} asks how the cost of computing a
function $f$ scales with the number $k$ of instances of the function
that we need to compute. This problem has received a considerable
amount of attention in the context of query
complexity~\cite{ImpagliazzoRW94,BenAsherN95,NisanRS99,Shaltiel03,JainKS10,BenDavidK18,Drucker12},
communication
complexity~\cite{KarchmerRW95,FederKNN95,ChakrabartiSWY01,BarYossefJKS04,MolinaroWY13,BarakBCR13},
and beyond.

Given a function $f : \bitsn \to \bit$ and a parameter $k \ge 2$,
define $f^k : \bit^{n \cdot k} \to \bit^k$ by setting
$f^k(x^{(1)},\ldots,x^{(k)}) = \big(
f(x^{(1)}),\ldots,f(x^{(k)})\big)$. A simple union bound argument
shows that the randomized query complexity of $f^k$ is bounded above
by
\begin{equation}
\label{eq:directsum-ub}
R_\epsilon(f^k) = O\big( k \cdot R_{\frac \epsilon k}(f) \big)
\end{equation}
since we can run a randomized algorithm $\cA$ that computes $f$ with
error at most $\frac \epsilon k$ on each of the $k$ instances. An
analogous upper bound holds in the average query complexity setting as
well.

Jain, Klauck, and Santha~\cite{JainKS10} first considered the problem
of showing a direct sum theorem for randomized query complexity. They
showed that for every function $f$ and for small enough constant
$\delta > 0$,
$ \rdt_\epsilon(f^k) \ge \delta^2 k \cdot
\rdt_{\frac{\epsilon}{1-\delta} + \delta}(f).  $ Note that in this
inequality, the allowable error on the right-hand side of the equation
is \emph{larger} than the $\epsilon$ error parameter, in contrast to
the upper bound where it is (much) smaller. Ben-David and
Kothari~\cite{BenDavidK18} obtained an improved direct sum theorem
holds, showing that
$ \overline{\rdt}_\epsilon(f^k) \ge k \cdot
\overline{\rdt}_{\epsilon}(f) $ holds for every function. This result
is formally stronger since it implies the Jain--Klauck--Santha bound,
but it also does not show that the error parameter on the
right-hand-side of the inequality needs to be smaller than $\epsilon$,
as it is in the upper bound~\eqref{eq:directsum-ub}.

We show that the bound in~\eqref{eq:directsum-ub} is tight in the
average-case query complexity model.

\begin{theorem}
\label{thm:directsum}
For every function $f:\bitsn \rightarrow \bit$, every $k \ge 2$, and
every $0 \le \epsilon \le \frac1{20}$,
\[
\overline{\rdt}_\epsilon(f^k) = \Omega\left( k \cdot \overline{\rdt}_{\frac{\epsilon} k}(f) \right).
\]
\end{theorem}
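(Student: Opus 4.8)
The plan is to prove the contrapositive-style statement: an average-query algorithm for $f^k$ with small error and cost $T$ yields an average-query algorithm for a single instance of $f$ with error $\frac{\epsilon}{k}$ (up to constants) and expected cost $O(T/k)$. The obstacle to overcome is that a generic algorithm for $f^k$ need not spread its queries evenly across the $k$ blocks, nor need it succeed with error $\frac{\epsilon}{k}$ on any particular block — it only has to output the whole $k$-tuple correctly with probability $1-\epsilon$. The key idea is an \emph{embedding} argument: to solve one instance $x$ of $f$, pick a random coordinate $j \in [k]$, plant $x$ in block $j$, and fill the other $k-1$ blocks with independent samples from a carefully chosen "hard distribution" for $f$, then run the $f^k$-algorithm and read off the $j$-th output coordinate.

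First I would fix, for each $f$, a hard input distribution $\mu$ witnessing $\overline{\rdt}_{\epsilon/k}(f)$ via Yao's principle: every \emph{deterministic} decision tree computing $f$ to error $< \epsilon/k$ under $\mu$ has expected depth $\Omega\big(\overline{\rdt}_{\epsilon/k}(f)\big)$. Second, given a randomized algorithm $\cB$ for $f^k$ with error $\epsilon$ and average cost $T$, consider running $\cB$ on the product-planted instance described above; by a union bound over blocks, $\cB$'s output agrees with $f^k$ on \emph{all} blocks with probability $\ge 1-\epsilon$, hence on the planted block $j$ with probability $\ge 1-\epsilon$ over everything, so there is a choice of $j$ (and of the coin tosses sampling the other blocks) for which the induced single-instance algorithm errs with probability $\le \epsilon$ over $\cB$'s internal randomness and over $x\sim\mu$. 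Third — and this is the quantitative heart — I need the expected query cost \emph{charged to block $j$} to be $O(T/k)$ for a typical $j$: since the total expected number of queries is $\le T$ and the planted coordinate $j$ is uniform and independent of the $k-1$ padding samples (which are i.i.d.\ from $\mu$), the expected number of queries landing in block $j$ is $\le T/k$ by symmetry/linearity of expectation, for an appropriate averaging over $j$.

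The delicate point is the error budget: the reduction as stated only gives error $\epsilon$ on a single instance, not $\epsilon/k$, so a naive embedding proves the weaker bound $\overline{\rdt}_\epsilon(f^k) = \Omega\big(k\cdot\overline{\rdt}_\epsilon(f)\big)$ of Ben-David–Kothari rather than Theorem~\ref{thm:directsum}. To get the $\epsilon/k$ in the conclusion one must amplify the single-instance guarantee. The trick is to choose the padding distribution not as $\mu$ itself but as a distribution under which $\cB$, restricted to a single block, is forced to be correct with probability $1 - \epsilon/k$ \emph{conditioned on the other blocks being "easy"}: concretely, pad the other $k-1$ blocks with inputs drawn so that $f$ is constant (or near-constant) on them with overwhelming probability, so that $\cB$ "knows" the answer on those blocks cheaply and its entire $\epsilon$ error budget is effectively concentrated on block $j$; more carefully, one shows that if block $j$ carries $x\sim\mu$ and the rest are trivially-answerable, then the per-block error on $j$ cannot exceed $\epsilon/k$ for the best $j$, since otherwise the total error over all $k$ blocks would exceed $\epsilon$. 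Combining: the induced algorithm for $f$ has error $\le \epsilon/k$ under $\mu$ and expected cost $O(T/k)$, so $T/k = \Omega\big(\overline{\rdt}_{\epsilon/k}(f)\big)$, i.e.\ $T = \overline{\rdt}_\epsilon(f^k) = \Omega\big(k\cdot\overline{\rdt}_{\epsilon/k}(f)\big)$. The main obstacle, as flagged, is making the error-concentration step rigorous — ensuring the padding blocks are simultaneously (i) trivially answerable so they consume negligible error and cost, and (ii) compatible with $\mu$ on the special block so the hardness of $f$ under $\mu$ transfers — and handling the $\epsilon \le 1/20$ slack that this juggling requires.
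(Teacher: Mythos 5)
Your high-level structure (embed a single $f$-instance into a random block, pad the remaining $k-1$ blocks, and use a linearity-of-expectation argument to find a block receiving $\le T/k$ expected queries) is the right skeleton, but the error-concentration step you flag as ``the quantitative heart'' is exactly where the argument breaks, and your proposed fix does not work. Padding with ``trivially answerable'' inputs has two problems. First, it is not well-defined for a general total function $f$ --- there need be no distribution on which $f$ is near-constant or cheaply computable, and in any case the paper's theorem is quantified over \emph{all} $f$. Second, and more fundamentally, even if the other $k-1$ blocks were literally free, the constraint that $\cB$ errs with probability $\le \epsilon$ on the $k$-tuple gives you only that the marginal error on the planted block is $\le \epsilon$; it in no way forces it down to $\epsilon/k$. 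Your sentence ``otherwise the total error over all $k$ blocks would exceed $\epsilon$'' implicitly assumes all $k$ blocks carry comparable error mass, which is precisely what fails when $k-1$ of them are trivial. In short, this route yields $\overline{\rdt}_\epsilon(f^k) = \Omega(k\cdot\overline{\rdt}_\epsilon(f))$ (the Ben-David--Kothari bound, which you correctly identify as the weaker statement), not the claimed $\Omega(k\cdot\overline{\rdt}_{\epsilon/k}(f))$.

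The missing idea is the paper's abort mechanism combined with a \emph{conditional} error decomposition. The paper writes the success probability on $\mu^k$ as a telescoping product $\prod_i \Pr[\cA(x)_i = f(x^{(i)}) \mid \cA(x)_{<i} = f^k(x)_{<i},\, \cA(x)\neq\bot]$ and deduces that for at least $\tfrac23 k$ indices the \emph{conditional} per-block error is $O(\epsilon/k)$. To turn a conditional guarantee into an unconditional single-instance algorithm, the reduced algorithm $\cA'$ must be allowed to \emph{abort}: it fixes the non-planted blocks to a good deterministic $z$, runs $\cA$, checks whether $\cA$'s answers on blocks $<i^*$ match the (known) values $f^k(z)_{<i^*}$, and aborts if not. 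Without this abort option, you cannot condition and the $\epsilon/k$ bound evaporates. This is not merely a convenience: Shaltiel's counterexample (cited in the paper) shows that a strong direct sum for ordinary distributional query complexity is \emph{false}, so any correct proof along these lines is forced into some such workaround. The paper's Proposition~\ref{prop:abort-average} and Lemma~\ref{lem:yao-abort} supply the abort-aware Yao principle and the translation back to $\overline{\rdt}$, neither of which appears in your sketch.
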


We establish Theorem~\ref{thm:directsum} by proving a corresponding
strong direct sum theorem in the distributional setting, as we discuss
in more details in Section~\ref{sec:proof-overviews}. It remains open
to determine whether a similar strong direct sum theorem holds in the
worst-case randomized query complexity model. However, in that setting
Shaltiel~\cite{Shaltiel03} has shown that a proof of such a direct sum
theorem \emph{can't} be obtained via a corresponding theorem in the
distributional setting, as a counterexample shows that direct sum
theorems do not hold in this setting in general.

\subsection{Applications}

\paragraph*{Superlinear Direct-Sum-Type Theorem for Query Complexity}
Combining~\eqref{eqn:separation-ub} and~\eqref{eq:directsum-ub}, we
obtain a bound on the cost of computing $k$ instances of a function
$f$ with bounded (constant) error and the cost of computing a single
instance of the same function:
\begin{equation}
\label{eq:ds-ub}
\rdt(f^k) = O\big( k \log k \cdot \rdt(f) \big).
\end{equation}
Drucker~\cite[Open problem 2]{Drucker12} asked if the superlinear
dependence on $k$ in~\eqref{eq:ds-ub} is necessary for any total
function $f$. Theorems~\ref{thm:separation} and~\ref{thm:directsum}
give a positive answer to this question.

\begin{corollary}
\label{cor:directsum-rdt}
There exists a total function $f : \{0,1\}^n \to \{0,1\}$ such that
for all $1 \leq k \leq 2^{\errordep}$,
\[
\rdt(f^k) = \Theta\big(k\log k \cdot \rdt(f)\big).
\]
\end{corollary}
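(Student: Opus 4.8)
The plan is to combine Theorems~\ref{thm:separation} and~\ref{thm:directsum} in a black-box fashion. Let $f : \bitsn \to \bit$ be the total function guaranteed by Theorem~\ref{thm:separation}; more precisely, apply that theorem with the error parameter set to a small constant multiple of $1/k$. We need $1/k > 2^{-\errordep}$, i.e.\ $k < 2^{\errordep}$, which is exactly the range of $k$ stated in the corollary (up to constants absorbed into the hypothesis of Theorem~\ref{thm:separation}, which allows any $\eps$ in $(2^{-\errordep}, \tfrac13]$). For this $f$ we then have, on the one hand, $\overline{\rdt}_{\epsilon/k}(f) = \Omega(\rdt(f) \cdot \log \tfrac{k}{\epsilon}) = \Omega(\rdt(f) \cdot \log k)$ for constant $\epsilon$, and on the other hand the trivial bound $\overline{\rdt}_{\epsilon/k}(f) \le \rdt_{\epsilon/k}(f) = O(\rdt(f) \cdot \log k)$ from success amplification~\eqref{eqn:separation-ub}. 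So $\overline{\rdt}_{\epsilon/k}(f) = \Theta(\rdt(f) \cdot \log k)$.

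For the lower bound on $\rdt(f^k)$, write $\rdt(f^k) \ge \overline{\rdt}_{1/3}(f^k)$ (worst-case cost dominates average cost) and apply Theorem~\ref{thm:directsum} with, say, $\epsilon = \tfrac13$ replaced by a suitable constant $\le \tfrac1{20}$ — one first reduces the error of any algorithm for $f^k$ from $\tfrac13$ to $\tfrac1{20}$ at the cost of a constant factor, so $\rdt(f^k) = \Omega(\overline{\rdt}_{1/20}(f^k)) = \Omega(k \cdot \overline{\rdt}_{1/(20 k)}(f))$. Plugging in the two-sided estimate from the previous paragraph (with $\epsilon = \tfrac1{20}$, absorbed into constants) gives $\rdt(f^k) = \Omega(k \cdot \rdt(f) \cdot \log k)$.

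For the matching upper bound, invoke~\eqref{eq:ds-ub}, which states $\rdt(f^k) = O(k \log k \cdot \rdt(f))$ for every total function, hence in particular for our $f$. Combining the two directions yields $\rdt(f^k) = \Theta(k \log k \cdot \rdt(f))$, as claimed.

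The only delicate point — and the one place to be careful rather than routine — is bookkeeping the error parameters and the admissible range of $k$: Theorem~\ref{thm:separation} is stated for $\eps$ bounded below by $2^{-\errordep}$, and Theorem~\ref{thm:directsum} for $\epsilon \le \tfrac1{20}$, so one must check that choosing $\epsilon$ a small absolute constant and feeding $\epsilon/k$ (or a constant times it) into the separation theorem stays inside its valid window precisely when $k \le 2^{\errordep}$, matching the corollary's hypothesis. There is no new mathematical obstacle beyond this alignment of parameters; the substantive work is entirely contained in the two theorems already proved.
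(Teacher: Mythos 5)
Your proposal is correct and follows essentially the same route as the paper: the upper bound comes from the universal bound~\eqref{eq:ds-ub}, and the lower bound chains $\rdt(f^k) \ge \overline{\rdt}(f^k) = \Omega(k \cdot \overline{\rdt}_{\Theta(1/k)}(f))$ via Theorem~\ref{thm:directsum} with $\overline{\rdt}_{\Theta(1/k)}(f) = \Omega(\rdt(f)\log k)$ via Theorem~\ref{thm:separation}, exactly as the paper does. Your explicit remark about first reducing the error of an $f^k$-algorithm from $\tfrac13$ to $\le \tfrac1{20}$ before invoking Theorem~\ref{thm:directsum} is in fact a slightly more careful treatment than the paper's, which silently writes $\overline{\rdt}_{1/3}$ even though the theorem is stated only for $\epsilon \le \tfrac1{20}$; and the upper bound $\overline{\rdt}_{\epsilon/k}(f) = O(\rdt(f)\log k)$ you derive is true but unnecessary for the corollary, since only the lower bound on $\overline{\rdt}_{\epsilon/k}(f)$ is used.
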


Note that Corollary~\ref{cor:directsum-rdt} stands in contrast to the
quantum query complexity setting, where such a superlinear dependence
on $k$ is not required~\cite{BuhrmanNRW07}.

\paragraph*{Superlinear Direct-Sum-Type Theorem for Communication Complexity}
Let $\rcc(f)$ denote the minimum amount of communication required of a
public-coin randomized protocol that computes a function
$f : \bitsm \times \bitsn \to \{0,1\}$ with error at most
$\frac13$. As in the query complexity model, the communication
complexity of the function $f^k$ is bounded above by
\begin{equation}
\label{eq:dscc-ub}
\rcc(f^k) = O\big( k \log k \cdot \rcc(f) \big).
\end{equation}
Feder, Kushilevitz, Naor, and Nisan~\cite{FederKNN95} showed that this
upper bound is not tight in general, as the equality function
satisfies
$\rcc(\textsc{Eq}^k) = O\big(k \cdot \rcc(\textsc{Eq})
\big)$.\footnote{In fact, Feder et al.~showed that the
  \emph{private-coin} randomized communication complexity of
  \textsc{Eq} satisfies the stronger relation
  $\rccpriv(\textsc{Eq}^k) = o\big(k \cdot \rccpriv(\textsc{Eq})
  \big)$; their construction also directly establishes the result
  stated in the main text.}  They then asked whether
$\rcc(f^k) = O(k \cdot \rcc(f))$ holds for all functions or
not~\cite[Open problem 2 in \S7]{FederKNN95}.

In the last few years, there has been much work on related direct sum questions. Molinaro, Woodruff, and Yaroslavtsev~\cite{MolinaroWY13,MolinaroWY15} showed that in the \emph{one-way} communication complexity model, the equality function does satisfy the superlinear direct sum bound
$\rccow(\textsc{Eq}^k) = \Theta\big(k \log k \cdot \rccow(\textsc{Eq})
\big)$.
In the two-way communication complexity model that we consider,
Barak, Braverman, Chen, and Rao~\cite{BarakBCR13} showed that every function $f$ satisfies the direct sum $R(f^k) = \widetilde\Omega\big(\sqrt{k}\,R(f)\big)$, and this bound remains the state of the art as far as we know.
Using the connection between information complexity and amortized communication complexity of Braverman and Rao~\cite{BravermanR14}, Ganor, Kol, and Raz~\cite{GanorKR14} also showed that there is a partial function whose distributional communication complexity is exponentially larger than its amortized distributional communication complexity, showing that a tight direct sum theorem cannot hold in general in this setting.
None of these results, however, answer Feder et al.'s original question.

Corollary~\ref{cor:directsum-rdt} combined with the randomized
query-to-communication lifting theorem of \goos, Pitassi, and
Watson~\cite{GoosPW17} answers Feder et al.'s question by showing that
there is a function $f$ for which the bound in~\eqref{eq:dscc-ub} is
tight.

\begin{corollary}
\label{cor:directsum-rcc}
There is a constant $c > 0$ and a total function
$f : \{0,1\}^n \times \{0,1\}^n \to \{0,1\}$ such that for all
$1 \leq k \leq 2^{n^{c}}$,
\[
\rcc(f^k) = \Theta\big(k\log k \cdot \rcc(f)\big).
\]
\end{corollary}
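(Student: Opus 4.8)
The plan is to transfer Corollary~\ref{cor:directsum-rdt} from the query model to the communication model through the randomized query-to-communication lifting theorem of \goos, Pitassi, and Watson~\cite{GoosPW17}. Let $g:\bit^{N}\to\bit$ be the total function supplied by Corollary~\ref{cor:directsum-rdt}, so that $\rdt(g^k)=\Theta(k\log k\cdot\rdt(g))$ for all $1\le k\le 2^{(N/\log N)^{1/3}}$, and let $\mathrm{Ind}_m:[m]\times\bit^{m}\to\bit$, $\mathrm{Ind}_m(i,z)=z_i$, be the index gadget with $m$ a polynomial in parameters fixed below. I would take $f\deq g\circ\mathrm{Ind}_m^{N}$, i.e.\ $f(x_1,\dots,x_N)\deq g\bigl(\mathrm{Ind}_m(x_1),\dots,\mathrm{Ind}_m(x_N)\bigr)$, and pad the shorter (pointer) half of each party's input with dummy bits so that $f$ becomes a total function $f:\bit^{n}\times\bit^{n}\to\bit$ with $n=\Theta(Nm)$.

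The structural point that makes everything work is that running $k$ independent copies of $f$ is the same as feeding $g^k$ the outputs of $Nk$ independent copies of the gadget: after reindexing inputs, $f^k=g^k\circ\mathrm{Ind}_m^{Nk}$. I would then use the lifting theorem twice. Applied to the Boolean function $g$ it gives $\rcc(f)=\Theta(\rdt(g)\cdot\log m)$, and here only the easy inequality $\rcc(f)=O(\rdt(g)\log m)$ is needed --- it follows by simulating a decision tree for $g$ with $O(\log m)$ bits of communication per query. Applied to $g^k$ --- which has $k$-bit output, so one lifts it in the guise of its canonical total search problem $\{(y,g^k(y)):y\in\bit^{Nk}\}$, whose randomized query complexity is $\rdt(g^k)$ --- the simulation direction of the lifting theorem gives $\rcc(f^k)=\rcc(g^k\circ\mathrm{Ind}_m^{Nk})=\Omega(\rdt(g^k)\cdot\log m)$. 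Combining with Corollary~\ref{cor:directsum-rdt},
\[
\rcc(f^k)\;=\;\Omega\bigl(\rdt(g^k)\log m\bigr)\;=\;\Omega\bigl(k\log k\cdot\rdt(g)\log m\bigr)\;=\;\Omega\bigl(k\log k\cdot\rcc(f)\bigr),
\]
while the matching upper bound $\rcc(f^k)=O(k\log k\cdot\rcc(f))$ is exactly~\eqref{eq:dscc-ub}; together these give $\rcc(f^k)=\Theta(k\log k\cdot\rcc(f))$.

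What remains is to extract the range of $k$ for which all of this holds with one fixed function $f$. Each invocation of the lifting theorem needs the index gadget to be polynomially large in its number of copies --- this is $N$ for $g$ but $Nk$ for $g^k$ --- while $f$'s input length is $\Theta(Nm)$; reconciling these two requirements (a careless choice of $m$ lets $f$'s input length grow with $k$) while keeping the range as large as the bound $k\le 2^{(N/\log N)^{1/3}}$ coming from Corollary~\ref{cor:directsum-rdt} is what produces the range $1\le k\le 2^{n^{c}}$ claimed in the statement, for an absolute constant $c>0$. I expect this parameter bookkeeping --- together with the point that the \goos--Pitassi--Watson simulation must be applied to the multi-output function $g^k$ through its search-problem form --- to be the main obstacle; once the lifting theorem is in hand and the parameters are chosen, the rest is routine.
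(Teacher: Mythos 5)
Your proposal takes essentially the same route as the paper: compose the separating query function of Corollary~\ref{cor:directsum-rdt} with the GPW index gadget and use the commutation $(f \circ \textsc{Ind}_m)^k = f^k \circ \textsc{Ind}_m$ to push the query-model direct-sum separation through Theorem~\ref{thm:gpw-lift}. Two presentational differences: you handle the $k$-bit output of $g^k$ by recasting it as its canonical search problem, whereas the paper simply invokes the Remark after Theorem~\ref{thm:gpw-lift} that the GPW lifting holds verbatim for functions with any finite range (the same extension in different dress); and you note that only one-sided lifting bounds are needed (the simulation upper bound for $\rcc(f)$ and the lower bound for $\rcc(f^k)$, with the upper bound on $\rcc(f^k)$ coming from the universal inequality~\eqref{eq:dscc-ub}), which the paper's use of $\Theta$ subsumes. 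The parameter bookkeeping you flag --- the gadget for $g^k$ having $Nk$ copies rather than $N$, and reconciling a gadget size that accommodates all $k$ in the range with a \emph{single} fixed $f$ --- is left undone in your write-up. The paper packages both lifting identities into Theorem~\ref{thm:gpw-lift} with $m = n^{256}$ fixed once and the $\Theta(\log n)$ factor held constant for $f^k$ as well as $f$, so its corollary proof is a four-line chain of equalities with no further parameter work; your proposal, which acknowledges leaving that bookkeeping open, is therefore structurally identical but not complete as written.
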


\subsection{Proof Overviews}
\label{sec:proof-overviews}

\paragraph*{Bounded-Error Separation Theorem}
The proof of Theorem~\ref{thm:separation} is established by following
the general approach used to great effect by Ambainis et
al.~\cite{AmbainisBBLSS17}: first, identify a partial function $f$ for
which the query complexity separation holds, then design a variant of
the \goos--Pitassi--Watson (GPW) pointer function~\cite{GoosPW15} that
``embeds'' the partial function into a total function and preserves
the same separation.

The first step in this plan is accomplished by observing that the
partial \emph{gap identity} function
$\gapid : \bitsm \rightarrow \bitz$ defined by
\[
\gapid(x) = \begin{cases} 1 & \text{if } |x| = 0,\\ 0 & \text{if } |x| = \lfloor\frac{m}{2}\rfloor,\\ * & \text{otherwise}\end{cases}
\]
satisfies
$\overline{\rdt}_\eps(\gapid) = \Theta\big(\rdt(\gapid) \cdot
\log\tfrac{1}{\eps} \big)$ for every $\epsilon \ge 2^{-m}$.

Ambainis et al.~\cite{AmbainisBBLSS17} also used (essentially) the
same gap identity function to establish the separation
$\overline{\rdt}_0(f) = \widetilde{\Omega}\big(\rdt(f)^2\big)$. In
constructing a GPW pointer function analogue of the $\gapid$ function,
however, Ambainis et al.~lose a few logarithmic factors: their
construction shows that there exists a total function
$f : \bitsn \to \bit$ with $\epsilon$-error randomized query
complexity that satisfies
$\rdt_\epsilon(f) = O(\sqrt{n} \log^2n \log \frac1\eps)$ and
$\rdt_\epsilon(f) = \Omega(\sqrt{n} \log \frac1\epsilon)$. The
polylogarithmic gap between those two bounds is not particularly
important when comparing this query complexity to the zero-error
randomized query complexity
$\overline{\rdt}_0(f) = \widetilde{\Omega}(n)$ of the same function,
but it makes it impossible to obtain any separation at all between
$\rdt(f)$ and $\rdt_\eps(f)$ whenever
$\epsilon = \Omega(n^{-\log n})$. To prove
Theorem~\ref{thm:separation}, we need a new variant of the GPW pointer
function whose analysis avoids \emph{any} gap that is a non-constant
function of $n$.

At a high-level, GPW pointer functions are constructed by defining an
$n \times m$ array of cells, whose values are taken from some
(typically fairly large) alphabet $\Sigma$. The first logarithmic gap
in Ambainis et al.'s upper and lower bounds occurs because the upper
bound is measured in terms of the number of \emph{bits} queried by the
algorithm while the lower bound is in terms of the number of
\emph{cells} queried by an algorithm. To eliminate this gap, we must
either reduce the size of the alphabet from $|\Sigma| = O(\log n)$ to
a constant size or modify the analysis so that both the upper and
lower bounds are in terms of bit-query complexity. We do the latter,
using the notion of \emph{resilient functions}~\cite{ChorGHFRS85} to
show that an algorithm must query a constant fraction of the bits of a
cell to learn \emph{anything} about the contents of that cell.
Resilient functions were introduced by Chor et al.~\cite{ChorGHFRS85},
who gave an essentially optimal construction using basic linear
algebra and the probabilistic method.  Sherstov recently created a
gadget~\cite{Sherstov18} resilient to approximate polynomial
degree.  This gadget is both similar in construction
to~\cite{ChorGHFRS85} and in motivation to our work; it too removes
some loss due to function inputs coming from large alphabets.

The second logarithmic gap in Ambainis et al.'s construction occurs
because the location of the ``special'' cells that an algorithm seeks
to discover in the GPW pointer function can be found by following a
binary tree structure; the upper bound accounts for the $\log n$ cell
queries an algorithm requires to follow this structure while the lower
bound holds even if an algorithm finds these special cells in a single
query. We bypass this problematic issue with a simple but powerful
observation: in our setting, once we use resilient functions to encode
the contents of each cell, there is no longer any requirement to keep
the size $|\Sigma|$ of the alphabet for each cell in the GPW pointer
function to be polylogarithmic in $n$ and so we can include a
\emph{lot} more information in each cell without affecting the query
complexity gap. We use this flexibility to replace pointers to the
root of a binary tree structure with direct pointers to all the
special cells in its leaves.

The details of the proof of Theorem~\ref{thm:separation} are presented
in Section~\ref{sec:separation}.

\paragraph*{Strong Direct Sum Theorem}
Our proof of the strong direct sum theorem proceeds by establishing an
analogous result in the setting of distributional query
complexity. The \emph{$\epsilon$-error distributional complexity} of
$f : \bitsn \to \bit$ with respect to the distribution $\mu$ on
$\{0,1\}^n$, denoted by $\ddt^\mu_{\epsilon}(f)$, is the minimum query
complexity of a deterministic algorithm that computes the value $f(x)$
correctly with probability at least $1-\epsilon$ when $x$ is drawn
from $\mu$.

The distributional complexity approach is also the one used in prior
work on direct sum theorems for query
complexity~\cite{JainKS10,BenDavidK18}. The challenge with this
approach, however, is that a strong direct sum theorem for
distributional query complexity does \emph{not} hold in general, as
Shaltiel~\cite{Shaltiel03} demonstrated (see also \S4
in~\cite{Drucker12}): there exists a function $f$ and a distribution
$\mu$ on $f$'s domain for which
$\ddt^{\mu^k}_\epsilon(f^k) = O\big(\epsilon k
\ddt^\mu_\epsilon(f)\big)$.

A similar barrier to strong direct sum theorems exists in the
communication complexity setting. Molinaro, Woodruff, and
Yaroslavstev~\cite{MolinaroWY13,MolinaroWY15} bypassed this barrier by
considering randomized protocols that are allowed to abort with some
bounded probability. They were then able to show that the information
complexity of such communication protocols (in both the one-way and
two-way communication models) satisfies a strong direct sum property.

Following an analogous approach, we consider randomized algorithms
that are allowed to \emph{abort} (or, equivalently, to output some
value $\bot$ that corresponds to ``don't know'') with some probability
at most $\delta$. The \emph{$\epsilon$-error, $\delta$-abort
  randomized query complexity} of a function $f$ is denoted by
$R_{\delta,\epsilon}(f)$. With a natural extension of Yao's minimax
principle, we can obtain bounds on this randomized query complexity by
considering the corresponding \emph{$\epsilon$-error, $\delta$-abort
  distributional complexity} $\ddt^\mu_{\delta,\epsilon}(f)$ of a
function $f$, which is the minimum query complexity of deterministic
algorithms must err with probability at most $\epsilon$ and abort with
probability at most $\delta$ when inputs are drawn from the
distribution $\mu$. We show that a strong direct sum theorem does hold
in this setting.

\begin{lemma}
\label{lem:main-ds}
There exists a constant $c$ such that for every function
$f : \{0,1\}^n \to \{0,1\}$, every distribution $\mu$ on $\{0,1\}^n$,
and every $0 \le \delta, \epsilon \le \frac1{40}$,
\[
  \ddt^{\mu^k}_{\delta,\epsilon}(f^k) = \Omega\left( k \cdot
    \ddt^\mu_{\frac1{3},\frac{c\cdot \epsilon}{k}}(f) \right).
\]
\end{lemma}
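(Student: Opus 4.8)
The plan is to argue by contradiction: suppose there is a deterministic algorithm $\cB$ for $f^k$ that queries at most $T = o(k \cdot \ddt^\mu_{1/3,c\epsilon/k}(f))$ coordinates and, when the input is drawn from $\mu^k$, errs on the output with probability at most $\epsilon$ and aborts with probability at most $\delta$. (Here "errs" should be read coordinate-wise or as "outputs a wrong vector"—I expect the right notion is that $\cB$ outputs a length-$k$ vector and is wrong if \emph{any} coordinate is wrong, matching how $f^k$ is defined; I will need to be a little careful about which convention makes the abort/error budgets compose.) From $\cB$ I want to extract, for a typical coordinate $i \in [k]$, a randomized algorithm $\cA_i$ for the single-instance problem $f$ that has small \emph{expected} query cost on $\mu$ and, crucially, error at most $c\epsilon/k$ and abort probability at most $1/3$. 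Averaging the expected cost of $\cA_i$ over $i$ gives $T/k = o(\ddt^\mu_{1/3,c\epsilon/k}(f))$, and then invoking the minimax/averaging direction (a deterministic algorithm of that expected cost on $\mu$, hence after truncation one of that worst-case cost with slightly worse parameters) contradicts the definition of $\ddt^\mu_{1/3,c\epsilon/k}(f)$.

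The construction of $\cA_i$ is the standard direct-sum embedding. On input $x \sim \mu$ for the $i$-th instance, $\cA_i$ samples the other $k-1$ instances $x^{(j)} \sim \mu$ independently, places $x$ in coordinate $i$, runs $\cB$, and reads off $\cB$'s $i$-th output bit (translating $\cB$'s global $\bot$ into either $\bot$ for $\cA_i$, or, better, only aborting when the $i$-th coordinate is genuinely undetermined). The number of coordinates of the embedded instance $x$ that $\cA_i$ reads is exactly the number of queries $\cB$ makes inside block $i$; summed over $i$ this is at most $T$, so $\E_i \E[\text{queries of }\cA_i] \le T/k$. The key point is to control error and abort \emph{per coordinate}: since $\cB$ errs with probability $\le \epsilon$ over $\mu^k$, the expected number of output coordinates on which it is wrong is $\le \epsilon$ (if error means a single wrong bit) or we can say $\Pr[\text{coordinate }i\text{ wrong}] \le \epsilon$ for the "any wrong" convention is too weak—so I will adopt the per-coordinate-error reading, under which by linearity $\frac1k\sum_i \Pr_{\mu^k}[\cB \text{ wrong on }i] \le \epsilon/k$ is \emph{false}; rather $\sum_i \Pr[\text{wrong on }i]\le \epsilon$, hence the average over $i$ of the $i$-th coordinate error is $\le \epsilon/k$, and similarly the average abort is $\le \delta/k \le 1/3$. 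Thus a \emph{random} $i$ gives an $\cA_i$ with expected error $\le \epsilon/k$ and expected abort $\le \delta/k$; by Markov, for a constant fraction of $i$ (absorbing the loss into the constant $c$) we get error $\le c\epsilon/k$ and abort $\le 1/3$ simultaneously, and among those we can pick one with expected query cost $O(T/k)$.

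The main obstacle I anticipate is making the abort and error budgets compose cleanly through the embedding while keeping the single-instance algorithm \emph{truly} within the $(1/3\text{-abort},\, c\epsilon/k\text{-error})$ regime—in particular deciding what $\cA_i$ does when $\cB$ globally aborts but coordinate $i$ "would have been" fine, and vice versa, and making sure that "erring on a random coordinate" is the right event to budget (this is exactly why the lemma loosens the abort parameter all the way to $1/3$: it buys room to convert a too-frequent event on some coordinates into a rare event after averaging). A secondary technical point is the conversion between expected query cost on $\mu$ and worst-case query cost: truncating $\cA_i$ after $3\E[\text{cost}]$ queries and declaring $\bot$ on timeout increases the abort probability by at most $1/3$ via Markov, which is again why the target abort bound is the comparatively generous $1/3$; I should state the minimax principle for $\delta$-abort distributional complexity (as the excerpt promises a "natural extension of Yao's minimax principle") and the expected-to-worst-case passage as small self-contained claims before assembling the averaging argument above.
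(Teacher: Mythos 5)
Your general framework (embed one instance, average over coordinates, use Markov to fix the other blocks, and pay for the expected-to-worst-case conversion with the generous $1/3$ abort budget) matches the paper, but there is a genuine gap at the step where you try to extract an $O(\epsilon/k)$ error bound for a typical coordinate, and the gap is exactly where the ``strong'' in strong direct sum lives.

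The error convention in the lemma is the standard one: $\cA$ outputs a length-$k$ vector and the error event is $\{\cA(x)\neq f^k(x)\}$, i.e.\ \emph{some} coordinate is wrong, and this has probability at most $\epsilon$. Under this convention your claim $\sum_i \Pr_{\mu^k}[\cB \text{ wrong on } i]\le \epsilon$ is false: the per-coordinate wrong events are nested inside the global error event, so $\Pr[\text{wrong on } i]\le\epsilon$ for every $i$, and in the worst case (one bad event causing all coordinates to be wrong simultaneously) the sum is $k\epsilon$, not $\epsilon$. Averaging therefore gives a typical coordinate with error at most $\epsilon$, not $\epsilon/k$ --- which is the \emph{ordinary} direct sum and is strictly weaker than the lemma. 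The other reading you float, where ``error $\le\epsilon$'' means the \emph{expected number} of wrong coordinates is $\le\epsilon$, would make your averaging go through, but it is a different and much stronger hypothesis than what the lemma assumes; you cannot freely switch to it.

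The paper gets the $1/k$ gain by a multiplicative (chain-rule) decomposition rather than an additive one: conditioned on not aborting,
\[
1-2\epsilon \;\le\; \Pr\!\left[\cA(x)=f^k(x)\mid \cA(x)\neq\bot\right]
\;=\; \prod_{i\le k}\Pr\!\left[\cA(x)_i=f(x^{(i)})\;\middle|\;\cA(x)_{<i}=f^k(x)_{<i},\ \cA(x)\neq\bot\right],
\]
so that at least $\tfrac23 k$ of the conditional per-coordinate error probabilities are $O(\epsilon/k)$ (else the product would drop below $1-2\epsilon$). This bound is \emph{conditional} on the preceding coordinates being correct, and making that conditioning usable in the single-instance algorithm is the second idea you are missing: after fixing the other blocks to a specific $z$ (by Markov, as you do), the reduced algorithm $\cA'$ \emph{knows} $f^k(z^{(i^*\gets x)})_{<i^*}$, so it can \emph{verify} that $\cA$'s outputs on blocks $<i^*$ are correct and abort otherwise. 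That verification step converts the conditional error bound into an unconditional error of $O(\epsilon/k)$ at the cost of an extra $O(\epsilon)$ contribution to the abort probability --- precisely the budget that the $1/3$ abort parameter and Markov slack absorb. Without the telescoping decomposition and the verification step, your reduction cannot beat error $\epsilon$ per coordinate, and the lemma does not follow.
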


The proof of Theorem~\ref{thm:directsum} is then obtained from this lemma by showing that an analogue of Yao's minimax principle holds for algorithms that can both err and abort. The full details of the proofs of Lemma~\ref{lem:main-ds} and
Theorem~\ref{thm:directsum} are presented in
Section~\ref{sec:ds}.

\section{Bounded-Error Separation Theorem}
\label{sec:separation}

We complete the proof of Theorem~\ref{thm:separation} in this section.
In Section~\ref{sec:ptrfcn}, we first define the pointer function
$\PtrFcn$ at the heart of the proof.  In
Sections~\ref{sec:GapID}--\ref{sec:ptrfcn-lb}, we establish a lower
bound on the query complexity of the $\PtrFcn$ function via reductions
from the $\gapID$ function, and in Section~\ref{sec:ptrfcn-ub}, we
provide a matching upper bound on this query complexity.  We complete
the proof of Theorem~\ref{thm:separation} in
Section~\ref{sec:thmsep-proof} by combining these results with the use
of resilient functions.

\subsection{Pointer Function}
\label{sec:ptrfcn}

The total function at the heart of the proof of
Theorem~\ref{thm:separation} is a variant of the
\goos--Pitassi--Watson pointer function $\PtrFcn$ that we define
below.  Let $[n]$ denote the set $\{1,\ldots, n\}$.

\begin{figure}
  \centering
  \includegraphics[height=6cm]{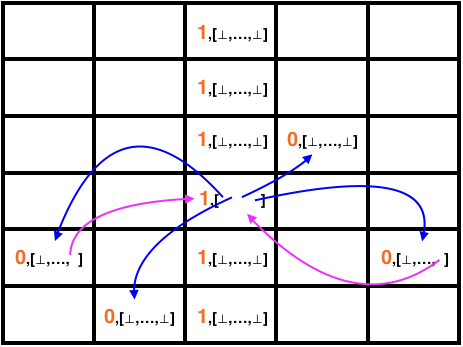}
  \caption{A $1$-input for $\PtrFcn$.  $\PtrFcn(x) =1$ iff there is a
    unique column whose cells all have value 1; there is a special
    cell within this column which has nontrivial row pointers; these
    pointers all point to cells with value 0; and half of these linked
    cells point back to the special cell.  Note: blank cells in the
    figure represent inputs that can be arbitrary.}
\end{figure}

Define
$\Gamma = \{0,1\} \times ([n] \cup \{\bot\})^m \times ([m] \cup \{\bot\})
$ to be the set of symbols $\sigma$ that encode a \emph{value} that
we denote by $\val(\sigma)$, $m$ \emph{row
pointers} that we denote by $\row_1(\sigma),\ldots,\row_m(\sigma)$, and
one \emph{column pointer} that we denote $\col(\sigma)$.

The function $\PtrFcn : \Gamma^{n \times m} \to \{0,1\}$ is defined as
follows. First, we represent an input $x \in \Gamma^{n \times m}$ as
an $n \times m$ grid of \emph{cells}. We say that a column
$j^* \in [m]$ is \emph{special} for $x$ if $\val(x_{i,j^*}) = 1$ for
every $1 \le i \le n$. Then $\PtrFcn(x) = 1$ if and only if
\begin{itemize}
\item There is a unique column $j^*$ that is special for $x$;
\item Within the special column $j^*$, there is a unique cell $i^*$
  called the \emph{special cell};
\item $\row_j(x_{i,j^*}) = \bot$ for all $i \neq i^*$ and all
  $j \neq j^*$;
\item For all $j \neq j^*$, let $i_j \deq \row_j(x_{i^*,j^*})$.  Then,
  we have
  \begin{itemize}
  \item $\val(x_{i_j,j}) = 0$ (i.e., all cells pointed to by the
    special cell have value 0)
  \item
    $|\{j \neq j^* : \col(x_{i_j,j}) = j^* \wedge
    \row_{j^*}(x_{i_j,j}) = i^*\}| = \lfloor \frac{m-1}{2} \rfloor$
    (i.e., \emph{half} the cells pointed to by the special cell point
    back to the special cell)
  \end{itemize}
\end{itemize}

We call the cells $(i_j,j)$ \emph{linked cells}; linked cells that
point back to the special cell are \emph{good}.  In summary,
$\PtrFcn(x) = 1$ if (i) there is a special column, (ii) within the
special column, there is a special cell, (iii) all cells in the
special column that are not the special cell have
$\row_j(x_{i,j*}) = \bot$ for all $j \neq j^*$, (iv) each linked cell
has value $0$, and (v) exactly half of the linked cells are good.

The following simple claim will be useful in obtaining the query
complexity lower bound for $\PtrFcn$.

\begin{claim}
\label{claim:specialcell}
Let $\cA$ be an $\eps$-error randomized algorithm for $\PtrFcn$.  Let
$z \in \PtrFcn^{-1}(1)$, and let $(i^*,j^*)$ be the special cell of
$z$.  Then $\cA(z)$ probes $(i^*,j^*)$ with probability at least
$1-2\eps$.
\end{claim}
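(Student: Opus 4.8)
The plan is to argue by a swapping/indistinguishability argument: if $\cA$ failed to probe the special cell $(i^*,j^*)$ with decent probability on some $1$-input $z$, we could construct a closely related $0$-input $z'$ on which $\cA$ behaves identically with too-high probability, contradicting the error guarantee. The key observation is that the special cell $(i^*,j^*)$ is the \emph{only} cell in the special column $j^*$ that carries nontrivial row pointers; every other cell $(i,j^*)$ with $i \neq i^*$ has $\row_j(x_{i,j^*}) = \bot$ for all $j \neq j^*$. So the role of ``special cell'' inside the column $j^*$ is entirely determined by which cell has the nontrivial pointer structure, and if $\cA$ does not look at $(i^*,j^*)$, it cannot tell that cell apart from any other cell in the column.

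Concretely, I would proceed as follows. Fix a $1$-input $z$ with special column $j^*$ and special cell $(i^*,j^*)$. Pick any other row $i' \neq i^*$ (possible since $n \geq 2$). Define $z'$ to be $z$ with the contents of cells $(i^*,j^*)$ and $(i',j^*)$ swapped. Since all cells in column $j^*$ have value $1$, the column $j^*$ is still special in $z'$, and no other column's specialness is affected; moreover $z'$ now has its nontrivial row pointers sitting in cell $(i',j^*)$ instead of $(i^*,j^*)$ --- but the cells those pointers reference are unchanged, and crucially the ``point back'' condition in the definition of $\PtrFcn$ requires $\row_{j^*}(x_{i_j,j}) = i^*$, i.e. it refers to the \emph{row index} $i^*$, which is now wrong. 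So in $z'$ either the would-be special cell is $(i',j^*)$ (whose row pointers reference the correct cells, but those cells point back to row $i^* \neq i'$, so zero good linked cells), or there is simply no valid special cell; either way $\PtrFcn(z') = 0$. Then, by a standard coupling of the internal randomness of $\cA$: on any run of $\cA$ in which $(i^*,j^*)$ is not probed, $\cA$ cannot distinguish $z$ from $z'$, since the only coordinates where $z$ and $z'$ differ are inside the cells $(i^*,j^*)$ and $(i',j^*)$, and on $z$ we may further couple so that $\cA$ also avoids $(i',j^*)$ --- here I need to be a little careful; the cleanest route is to note that if $\cA$ probes neither $(i^*,j^*)$ nor $(i',j^*)$ it sees the same thing on $z$ and $z'$, and separately handle runs that probe $(i',j^*)$ but not $(i^*,j^*)$ by a further symmetry (swap with yet another row). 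A slicker packaging: let $p = \Pr[\cA(z) \text{ does not probe } (i^*,j^*)]$, and observe that on such runs the view of $\cA$ is consistent with a $0$-input obtained by the above swap with a uniformly random $i' \neq i^*$, so $\cA$ errs on at least one such $0$-input with probability at least $p - \eps$ (the $\eps$ accounting for $\cA$'s error on $z$ itself, where it must still output $1$ with probability $\geq 1-\eps$). Combining $\Pr[\cA \text{ correct on } z] \geq 1-\eps$ and $\Pr[\cA \text{ correct on the bad } z'] \geq 1-\eps$ against the fact that runs avoiding $(i^*,j^*)$ look the same, one gets $p \leq 2\eps$, i.e. $(i^*,j^*)$ is probed with probability at least $1-2\eps$.

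The step I expect to be the main obstacle is making the indistinguishability precise when $\cA$ \emph{does} probe the cell $(i',j^*)$ that receives the swapped contents: then $\cA$ sees different data on $z$ and $z'$ at that cell, so the naive coupling breaks. The fix is to average over the choice of $i'$ (or to choose $i'$ adaptively as ``a row in column $j^*$ that $\cA$'s current run has not probed and will not probe,'' which exists with high probability since a low-cost algorithm probes few rows of the special column) --- but since Claim~\ref{claim:specialcell} is stated for an arbitrary $\eps$-error algorithm with no cost bound, the averaging-over-$i'$ argument is the robust one. I would therefore set it up as: for each $i' \neq i^*$ let $z'_{i'}$ be the corresponding $0$-input; then $\frac{1}{n-1}\sum_{i' \neq i^*} \Pr[\cA(z'_{i'}) = 1] \geq \Pr[\cA(z) = 1 \wedge (i^*,j^*) \text{ not probed}] - o(1)$ fails in general, so instead bound, for a fixed run $r$ of $\cA$ not probing $(i^*,j^*)$: $r$ probes at most (its cost) rows of column $j^*$, hence agrees with $z'_{i'}$ for all but at most that many values of $i'$; push this through and the $o(1)$ becomes negligible when $n$ is large, or disappears entirely if one simply notes the claim only needs to hold for the algorithms under consideration, which have cost $o(n)$. (If one wants the clean bound $1-2\eps$ with no slack, restrict attention to $n$ large enough and cost $\le n/2$, which is all that is used later.) Everything else is routine bookkeeping with the definition of $\PtrFcn$.
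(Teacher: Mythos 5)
Your proposal misses a much simpler construction, and the workaround you reach for has a genuine gap.

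The paper's proof does \emph{not} swap the special cell with another cell; it changes only the \emph{value} bit of $(i^*,j^*)$ from $1$ to $0$, leaving everything else fixed. Call the result $\bar z$. Since $z$ was a $1$-input, $j^*$ was the unique all-$1$s column, so after the flip no column is special, hence $\PtrFcn(\bar z)=0$. Now $z$ and $\bar z$ differ in exactly one cell, so any execution of $\cA$ that avoids $(i^*,j^*)$ behaves identically on $z$ and $\bar z$ and therefore errs on at least one of them; a union bound over the two error events gives $\Pr[\text{not probe } (i^*,j^*)] \le 2\eps$ directly, with no slack and no assumption on $\cA$'s cost.

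Your swap construction changes \emph{two} cells, $(i^*,j^*)$ and $(i',j^*)$, so the indistinguishability argument needs runs that avoid both, which you correctly identify as the obstacle. But neither of your repairs closes the gap in a way that matches the claim as stated. Averaging over $i'$ requires a worst-case query bound $Q$ on $\cA$; carrying it through carefully yields something like $p \le \eps\bigl(1 + \tfrac{n-1}{\,n-1-Q\,}\bigr)$, which is $3\eps$ if $Q \approx n/2$ and $2\eps + o(1)$ if $Q=o(n)$, never exactly $2\eps$. And the claim is invoked in Lemma~\ref{lem:fnm-lb} for an arbitrary $\eps$-error algorithm with no cost bound, so restricting to ``algorithms with cost $\le n/2$'' would require justification that isn't in the statement and isn't needed once you use the one-cell perturbation. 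The missing idea is simply that you do not need $\bar z$ to be a well-formed $1$-input with a different special cell; any $0$-input that agrees with $z$ everywhere except at $(i^*,j^*)$ suffices, and killing the value bit is the minimal such change.
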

\begin{proof}
  Let $\bar{z}$ be the same input as $z$ except that
  $\valx(i^*,j^*) = 0$.  Then $\PtrFcn(z)\neq \PtrFcn(\bar{z})$ but
  $z,\bar{z}$ differ only on the special cell.  Whenever $\cA$ doesn't
  probe the special cell, it must output the same value for $z$ and
  $\bar{z}$, so it errs on either $z$ or $\bar{z}$.  By the error
  guarantee of $\cA$ and a union bound, the probability that $\cA$
  doesn't probe cell $(i^*,j^*)$ is at most $2\eps$.
\end{proof}

\subsection{Lower Bound on the Query Complexity of GapID}
\label{sec:GapID}

We begin the proof of Theorem~\ref{thm:separation} by establishing a
(simple, asymptotically optimal) lower bound on the average query
complexity of the $\gapID$ function.

\begin{lemma}
\label{lem:gapid-lb}
For every $m \ge 2$ and every $\epsilon < \frac12$,
$ \overline{\rdt}_\epsilon(\gapID) = \Omega\left( \min\{\log
  \tfrac1\epsilon, m\} \right).  $
\end{lemma}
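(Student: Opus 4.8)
The plan is to use Yao's minimax principle together with a hard distribution that forces any deterministic algorithm to probe roughly $\log\frac{1}{\eps}$ coordinates before it can distinguish the two cases with the required confidence. Concretely, I would put the input distribution $\mu$ as follows: with probability $\frac12$ the input is the all-zeros string $0^m$ (a $1$-instance), and with probability $\frac12$ the input is a uniformly random string of Hamming weight $\lfloor m/2\rfloor$ (a $0$-instance). Since $\overline{\rdt}_\eps(\gapID)$ equals the randomized average query cost, by the minimax principle it suffices to lower-bound the expected number of queries of the best deterministic algorithm $\cA$ that errs with probability at most $\eps$ under $\mu$.

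The key observation is that, conditioned on the $0$-instance branch, the coordinates probed by a deterministic algorithm are revealed one at a time, and as long as $\cA$ has seen only zeros so far it cannot tell whether it is in the all-zeros branch or the weight-$\lfloor m/2\rfloor$ branch. After $\cA$ has probed $t$ coordinates of the all-zeros input and seen all zeros, the conditional probability (given the transcript of all zeros) that it is actually in the $0$-instance branch is some value that decays. More precisely: on the all-zeros input $\cA$ follows a fixed adaptive path and probes some deterministic sequence of coordinates; for the weight-$\lfloor m/2\rfloor$ branch, the probability that the first $t$ probed coordinates are all zero is $\binom{m-t}{\lfloor m/2\rfloor}/\binom{m}{\lfloor m/2\rfloor} \approx 2^{-t}$ for $t = O(\log m)$. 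Hence if $\cA$ probes at most $t$ coordinates on the all-zeros input, then on the event (of probability $\approx 2^{-t}$ within the $0$-branch) that a random weight-$\lfloor m/2\rfloor$ input agrees with the all-zeros input on all those coordinates, $\cA$ produces the \emph{same} output as on $0^m$, and therefore errs on at least one of the two branches. Choosing the threshold $t^\star = \lfloor \log_2 \frac{1}{4\eps}\rfloor$ (and noting $t^\star \le m$ trivially forces the $\min$ with $m$), I would argue that if $\cA$ probes fewer than $t^\star$ coordinates on $0^m$ then its total error under $\mu$ exceeds $\eps$: the all-zeros branch contributes error $0$ or $\frac12$ depending on $\cA$'s answer there, and in either case the complementary branch (weight-$\lfloor m/2\rfloor$ inputs that look all-zero on the probed set) contributes enough error. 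A short case analysis on whether $\cA$ outputs $1$ or not on $0^m$ closes this: if it outputs $1$, it errs on the $\gtrsim 2^{-t^\star}\cdot\frac12 > \eps$ mass of $0$-instances consistent with the all-zeros transcript; if it outputs anything else, it errs on the entire $\frac12$ mass of the $1$-branch.

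Therefore any $\eps$-error deterministic $\cA$ must probe at least $\Omega(\min\{\log\frac1\eps, m\})$ coordinates \emph{on the specific input} $0^m$, which has probability $\frac12$ under $\mu$, so its expected query cost under $\mu$ is $\Omega(\min\{\log\frac1\eps,m\})$; minimax then gives the claimed bound on $\overline{\rdt}_\eps(\gapID)$. The main obstacle I anticipate is making the conditioning argument fully rigorous for \emph{adaptive} algorithms — one has to be careful that on the all-zeros input the algorithm's probe sequence is a fixed sequence $q_1,\dots,q_t$ (it is, because all answers are $0$), and then compare against the weight-$\lfloor m/2\rfloor$ distribution; the binomial ratio estimate $\binom{m-t}{\lfloor m/2\rfloor}/\binom{m}{\lfloor m/2\rfloor} = \prod_{i=0}^{t-1}\frac{\lceil m/2\rceil - i}{m-i} \ge 2^{-\Theta(t)}$ for $t \le m/4$, say, handles the regime $\log\frac1\eps \le m/4$, and the remaining regime is absorbed into the $\min$ with $m$ by a cruder bound. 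I would also double-check the constant so that the threshold $t^\star$ indeed yields error strictly above $\eps$.
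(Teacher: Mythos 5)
Your proposal is correct and rests on the same technical ingredient as the paper's proof, namely the binomial-ratio estimate $\binom{m-t}{\lfloor m/2\rfloor}/\binom{m}{\lfloor m/2\rfloor}=\Omega(4^{-t})$, which shows that with nonnegligible probability a random weight-$\lfloor m/2\rfloor$ input looks identical to $0^m$ on any fixed set of $t$ probed coordinates, forcing a low-query algorithm to confuse the two cases. The packaging differs: the paper does \emph{not} pass to a deterministic tree via a minimax principle. It takes an arbitrary $\eps$-error randomized algorithm $\cA$ directly, lets $\mu$ be uniform on weight-$\lfloor m/2\rfloor$ inputs only, and combines the \emph{worst-case} error guarantee at the single input $0^m$ (giving, for every $x$, $\Pr[\cA(x)=0\wedge\text{``saw only zeros''}]\le\eps$) with the distributional error over $\mu$ to conclude $\Pr_{x\sim\mu}[\text{``saw only zeros''}]\le 2\eps$; it then lower-bounds that same probability conditioned on the query count being small. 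You instead put mass $\tfrac12$ on $0^m$, extract a deterministic tree by averaging, and do a case split on the tree's output at $0^m$. Both routes are valid, and yours is arguably the more standard packaging; the paper's version sidesteps two small frictions you correctly anticipate. First, fixing the randomness to a deterministic tree while controlling \emph{both} distributional error and distributional expected cost requires two applications of Markov, which inflates the error budget (you get roughly a $3\eps$-error tree, not an $\eps$-error one), so the threshold $t^\star$ must be calibrated against that inflated error. Second, your lower bound on the expected cost comes entirely from the single mass-$\tfrac12$ atom at $0^m$, which is fine but a little wasteful; the paper's version gets the same bound while keeping the hard distribution supported only on $\gapID^{-1}(0)$.
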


\begin{proof}
  Fix any $\epsilon \ge 2^{-\frac23 m}$. We will show that
  $\overline{\rdt}_\epsilon(\gapID) = \Omega\left( \log
    \tfrac1\epsilon \right)$.  This suffices to complete the proof of
  the theorem since it implies that for any
  $\epsilon < 2^{-\frac23 m}$,
  $\overline{\rdt}_\epsilon(\gapID) \ge \overline{\rdt}_{2^{-\frac23
        m}}(\gapID) = \Omega(m)$.

  Let $\cA$ be a randomized algorithm that computes $\gapID$ with
  error probability at most $\epsilon$. Let $Q \subseteq [m]$ be a
  random variable that denotes the set of coordinates queried by
  $\cA$, and let $\xi := \xi(Q,x)$ denote the event that each
  coordinate of the input $x$ queried by the algorithm has the value
  $0$. Note that when the event $\xi(Q,x)$ occurs, $\cA$ has the same
  behavior on input $x$ as it does on the input $0^m$. Since
  $\gapID(0^m) = 1$ and $\cA$ has error probability at most
  $\epsilon$, this means that for every input $x \in \{0,1\}^m$,
\[
\Pr[ \cA(x) = 0 \wedge \xi] = \Pr[ \cA(0^m) = 0 \wedge \xi] \le
\Pr[ \cA(0^m) = 0] \le \epsilon
\]
and so $\Pr[ \cA(x) = 1] \ge \Pr[ \cA(x) = 1 \wedge \xi] \ge \Pr[\xi] - \epsilon$.

Define $\mu$ to be the uniform distribution on all inputs
$x \in \{0,1\}^m$ with $|x| = m/2$. To err with probability at most
$\epsilon$ on those inputs, the algorithm $\cA$ must satisfy
$\Pr\left[ \cA(x) = 1 \right] \le \epsilon$ for every $x$ in the
support of $\mu$.  Combining this upper bound with the previous lower
bound, we therefore have that
\begin{equation}
\label{eq:xi-ub}
\Pr_{x \sim \mu,\, Q}[ \xi ] - \epsilon \le \E_{x \sim \mu}\big[ \Pr[ \cA(x) = 1]\big] \le \epsilon
\qquad \Longrightarrow \qquad
\Pr_{x \sim \mu,\, Q} \left[  \xi \right] \le 2\epsilon.
\end{equation}
For any value $1 \le q \le \frac m3$,
\begin{align*}
\Pr_{x \sim \mu,\, Q} \big[ \xi \;\big|\; |Q| = q\big]
= \frac{\binom{m-q}{m/2}}{\binom{m}{m/2}}
&= \frac{\frac m2 (\frac
    m2-1) \cdots (\frac m2 - q + 1)}{m (m-1) \cdots (m-q+1)} \\
&> \left(\frac{\frac m2 - q}{m-q}\right)^q > \left( \frac12 - \frac{q}{2(m-q)} \right)^q \ge 4^{-q}.
\end{align*}
Therefore,
\[
  \Pr_{x \sim \mu,\, Q} \big[ \xi \;\big|\; |Q| \le \tfrac12 \log
  \tfrac1{4\epsilon}\big] > 4^{-\frac12 \log\frac1{4\epsilon}} =
  4\epsilon.
\]
Combining this inequality with~\eqref{eq:xi-ub}, we obtain
\[
  2\epsilon \ge \Pr_{x \sim \mu,\, Q} \left[ \xi \right] \ge \Pr\left[
    |Q| \le \tfrac12 \log \tfrac1{4\epsilon}\right] \cdot \Pr_{x \sim
    \mu,\, Q} \left[ \xi \;\middle|\; |Q| \le \tfrac12 \log
    \tfrac1{4\epsilon}\right] > \Pr\left[ |Q| \le \tfrac12 \log
    \tfrac1{4\epsilon}\right] \cdot 4\epsilon.
\]
Rearranging the inequality yields
$\Pr\left[ |Q| \le \tfrac12 \log \tfrac1{4\epsilon}\right] < \frac12$
and so the average query complexity of $\cA$ is bounded below by
\[
  \E\big[|Q|\big] > \tfrac12 \log\tfrac1{4\epsilon} \cdot \Pr\left[
    |Q| > \tfrac12 \log \tfrac1{4\epsilon}\right] > \tfrac14
  \log\tfrac1{4\epsilon}. \qedhere
\]
\end{proof}

\subsection{Lower Bound on the Query Complexity of BlueRed}

We wish to relate the average query complexity of $\PtrFcn$ to that of
the $\gapID$ function. We do this by relating both query complexities
to that of another partial function that we call $\bluered$.

Let $\Sigma \deq \{\black,\blue, \red\}$, and call a symbol
\emph{colored} if it is not \black.  The input is an $n\times m$ grid
of entries from $\Sigma$, with the promise that each column contains a
unique colored entry, and either all colored entries are \red, or half
the colored entries are \blue.  Formally, we define
$\bluered : \Sigma^{n \times m} \rightarrow \bitz$ as follows:
\[
\bluered(x) = \begin{cases}
1 & \text{if each column has $1$ colored entry \& all colored entries are \red}, \\
0 & \text{if each column has $1$ colored entry \& exactly $\lfloor\frac{m}{2}\rfloor$ entries are \blue}, \\
* & \text{otherwise.}
\end{cases}
\]
The following reduction shows that the average query complexity of $\bluered$ is $\Theta(n)$ times as large as that of the $\gapID$ function.

\begin{lemma}
\label{lem:bluered-lb}
For every $\epsilon > 0$,
$
\overline{\rdt}_{\eps}(\bluered) \ge \frac{n}{4}\cdot\overline{\rdt}_{\eps}(\gapID).
$
\end{lemma}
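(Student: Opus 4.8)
The plan is to give a reduction that converts a randomized algorithm for $\gapID$ on inputs of length $m$ into a randomized algorithm for $\bluered$ on the $n \times m$ grid, while blowing up the expected number of queries by a factor of at most $n/4$. Equivalently, and more convenient for the average-query-cost model, I would instead take an algorithm $\cA$ for $\bluered$ and build from it an algorithm $\cB$ for $\gapID$. Given a $\gapID$ input $y \in \bit^m$, the idea is for $\cB$ to simulate $\cA$ on a random $\bluered$ instance $x$ constructed as follows: pick a uniformly random row index $r \in [n]$; in each column $j$, place the unique colored entry in row $r$, and color it \red if $y_j = 0$ and \blue if $y_j = 1$; fill all other cells of the grid with \black. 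Then $x$ is a legal $\bluered$ input: if $|y| = 0$ then all colored entries are \red, so $\bluered(x) = 1 = \gapID(y)$; if $|y| = \lfloor m/2\rfloor$ then exactly $\lfloor m/2\rfloor$ colored entries are \blue, so $\bluered(x) = 0 = \gapID(y)$. Thus $\cB$ can run $\cA$ on $x$ and output its answer, inheriting the error bound $\eps$.

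The key point is the query accounting. When $\cA$ probes a cell $(i,j)$ of the grid, $\cB$ needs to know its color. If $i \neq r$, the answer is \black and no query to $y$ is needed; if $i = r$, then $\cB$ queries $y_j$ (if it hasn't already) to learn the color. So the number of coordinates of $y$ that $\cB$ queries is at most the number of \emph{distinct columns} $j$ for which $\cA$ probes the cell $(r,j)$. I would bound the expectation of this quantity: by linearity of expectation over the random choice of $r$, and since $r$ is uniform in $[n]$, the expected number of probes that land in row $r$ is exactly $\frac1n$ times the expected total number of cell-probes of $\cA$. Taking $\cA$ to be an optimal $\eps$-error algorithm for $\bluered$, its expected cell-query cost on the input $x$ is at most $\overline{\rdt}_\eps(\bluered)$, so $\cB$ has expected query cost at most $\frac1n \overline{\rdt}_\eps(\bluered)$ on every $y$. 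Hence $\overline{\rdt}_\eps(\gapID) \le \frac1n \overline{\rdt}_\eps(\bluered)$, which rearranges to the claimed inequality (in fact with a factor of $n$ rather than $n/4$, which is even stronger, so the stated bound follows a fortiori — the $n/4$ presumably leaves slack for a slightly different but essentially equivalent reduction).

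One subtlety I would be careful about is the quantifier on inputs in the average-cost model: the cost of $\cB$ must be the maximum expected number of $y$-queries over \emph{all} $y \in \bit^m$, and indeed the bound above holds pointwise for each $y$ since $x = x(y,r)$ is a fixed (randomized only through $r$) transformation and the expected-cost guarantee of $\cA$ holds on every input. Another point is that $\cB$ should only charge a query to $y_j$ the first time column $j$'s row-$r$ cell is probed; this is handled by memoization and only helps. I expect the main (minor) obstacle is simply bookkeeping the two sources of randomness cleanly — the internal coins of $\cA$ and the choice of $r$ — and confirming that conditioning on $r$ does not interact badly with $\cA$'s error guarantee, which it does not since $x(y,r)$ is a valid $\bluered$ instance with the correct label for every $r$.
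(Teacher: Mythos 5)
Your overall reduction plan---simulate a $\bluered$ algorithm $\cA$ on a randomly embedded copy of the $\gapID$ input $y$ and charge one $y$-query each time $\cA$ probes a colored cell---is the same strategy the paper uses. But the single-random-row construction and, crucially, the ``linearity of expectation'' accounting contain a genuine gap.

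The issue is that the set of cells $\cA$ probes is \emph{not} independent of $r$: $\cA$ is adaptive, so its probe sequence depends on the colors it sees, which are determined by $r$. Writing $P(r)$ for $\cA$'s (random) probe set on the input $x(y,r)$, the quantity $\E_r\bigl[\,|P(r)\cap(\{r\}\times[m])|\,\bigr]$ need not equal $\tfrac1n\,\E_r\bigl[|P(r)|\bigr]$ when $P(r)$ varies with $r$. Already for $n=2$, $m=1$, with $\cA$ that probes $(1,1)$ and then $(2,1)$ only if $(1,1)$ was black, the first expectation is $1$ while the second is $3/4$. The structural reason your construction is vulnerable is that by placing every column's colored cell in the \emph{same} row $r$, you introduce a strong correlation between columns: once $\cA$ happens to find one colored cell it knows $r$ exactly, and a valid $\bluered$ algorithm could then probe $(r,2),(r,3),\ldots$ directly. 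On your distribution, every one of those probes forces a $y$-query in $\cB$, so $\cB$'s cost would be a constant fraction of $\cA$'s rather than a $\tfrac1n$ fraction, and the claimed bound fails.

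The paper sidesteps exactly this problem by drawing the colored row \emph{independently for every column} ($i_1,\dots,i_m$ i.i.d.\ uniform in $[n]$), so finding one column's colored cell reveals nothing about the others. Even then the query accounting requires care: the paper first assumes WLOG that $\cA$ never re-probes a column after finding its colored cell (justified because all remaining cells in that column are known to be $\black$), and then bounds the number of columns whose colored cell is found by splitting them into ``early'' columns (found within the first $n/2$ probes to that column, each such event having probability at most $2/n$ per probe, so $\le \tfrac2n\E[|Q|]$ in expectation) and ``late'' columns (each requiring more than $n/2$ probes to that column, so there are at most $\tfrac2n|Q|$ of them). Adding these gives the factor $4/n$. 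Your argument, as written, establishes neither the independence needed nor this two-case accounting, and the $1/n$ bound you derive is not correct for adaptive $\cA$.
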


\begin{proof}
  Fix any algorithm $\cA$ that computes $\bluered$ with error at most
  $\epsilon$ and has expected query cost
  $c = \overline{\rdt}_{\eps}(\bluered)$.  We will use $\cA$ to
  construct an algorithm $\cB$ that computes $\gapID$ with error at
  most $\epsilon$ and expected cost $4c/n$.

Given an input $x \in \{0,1\}^m$, the algorithm $\cB$ constructs an instance of
the $\bluered$ problem in the following way.
First, it generates indices $i_1,\ldots,i_m \in [n]$ independently and uniformly at random.  Then it defines
\[
y_{i,j} = \begin{cases}
\red & \text{if } i = i_j \text{ and } x_j = 0,\\
\blue & \text{if } i = i_j \text{ and } x_j = 1,\\
\black & \text{if } i \neq i_j.\end{cases}
\]
Finally, the algorithm $\cB$ emulates the algorithm $\cA$ on input $y$, querying
the value of $x_j$ whenever $\cA$ queries the bit $(i_j,j)$ for some $j \le m$. This construction guarantees that $\cB$ computes $\gapID$ with error at most $\epsilon$; its query complexity corresponds to the number of $\red$ or $\blue$ entries that are queried by $\cA$.

Let $Q \subseteq [n] \times [m]$ be the random variable that denotes the set of indices queried by $\cA$, and let $C \subseteq [m]$ denote the set of columns whose $\red$ or $\blue$ entry is queried by $\cA$. Without loss of generality, we may assume that $\cA$ does not query any entry of a column after it finds the colored entry within that column. We partition $C$ into two sets $C_{\mathrm{early}}$ and $C_{\mathrm{late}}$, where $C_{\mathrm{early}}$ denotes the set of columns whose colored entry is found within the first $\frac n2$ queries to that column and $C_{\mathrm{late}}$ denotes the set of columns whose colored entry was found with more than $\frac n2$ queries to that column. Let $X_1,X_2,\ldots,X_{|Q|}$ be indicator variables where $X_k = 1$ if and only if the $k$th query $(i,j)$ made by $\cA$ is $\red$ or $\blue$ \emph{and} is one of the first $\frac n2$ queries to column $j$. Since each value $i_j$ is drawn uniformly at random from $[n]$, each of these indicator variables has expected value
$\E[X_k] \le \frac2n$. Therefore,
\[
\E\big[|C_{\mathrm{early}}|\big] = \E\left[ \sum_{i \le |Q|} X_i\right] \le \frac 2n \E\big[ |Q| \big].
\]
Furthermore, by definition at least $\frac n2$ queries are made to each column in $C_{\mathrm{late}}$ so the expected size of this set is bounded by
$\E\big[|C_{\mathrm{late}}|\big] \le \frac 2n \E \big[ |Q| \big]$ and
\[
\E\big[|C|\big] = \E\big[|C_{\mathrm{early}}|\big] + \E\big[|C_{\mathrm{late}}|\big] \le \frac4n \E\big[ |Q| \big].
\]
Thus, the expected query cost of $\cB$ is at most $\frac{4}{n} \cdot \overline{\rdt}_{\eps}(\bluered)$, as we wanted to show.
\end{proof}

\subsection{Lower Bound on the Query Complexity of PtrFcn}
\label{sec:ptrfcn-lb}

\begin{lemma}\label{lem:fnm-lb}
For every $0 \le \epsilon \le \frac14$,
$
\overline{\rdt}_{\epsilon}(\PtrFcn) \ge \overline{\rdt}_{2\epsilon}(\bluered).
$
\end{lemma}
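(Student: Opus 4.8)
We must show that an $\epsilon$-error algorithm for $\PtrFcn$ yields a $2\epsilon$-error algorithm for $\bluered$ with no increase in (expected) query cost, hence $\overline{\rdt}_\epsilon(\PtrFcn) \ge \overline{\rdt}_{2\epsilon}(\bluered)$.

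**Plan: a reduction from $\bluered$ to $\PtrFcn$.** The plan is to take an arbitrary input $x \in \Sigma^{n\times m}$ to $\bluered$ — promised to have exactly one colored cell per column, with either all colored cells red or exactly $\lfloor (m-1)/2\rfloor$ — wait, $\lfloor m/2 \rfloor$ — of them blue — and build from it an input $y \in \Gamma^{n\times m}$ to $\PtrFcn$ so that $\PtrFcn(y) = \bluered(x)$, in a way that lets an algorithm $\cA$ for $\PtrFcn$ be emulated by an algorithm $\cB$ for $\bluered$ using the same set of cell queries. The natural correspondence: the colored cell of column $j$ in $x$ becomes the "interesting" cell of column $j$ in $y$; a red cell encodes $\val = 0$, a blue cell encodes $\val = 0$ as well (both colored cells will carry value $0$ in $y$), while black cells of $x$ become cells of $y$ with $\val = 1$. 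Then a column of $y$ is special (all values $1$) exactly when its unique colored $x$-cell is... no — if all colored cells are red then in $y$ every column has exactly one cell of value $0$, so \emph{no} column is special and $\PtrFcn(y) = 0$; we want that case to map to $\bluered = 1$. So instead flip it: black cells get $\val = 0$ and colored cells get $\val = 1$. Then a column of $y$ has all values equal to $1$ iff... it never does (it has $n-1$ zeros). This does not immediately work, so the reduction needs a cleverer embedding, presumably using a distinguished "red column" of $y$ whose cells are all set to $\val=1$ by fiat (not probed), playing the role of the candidate special column $j^*$, and routing the pointers among the colored cells of the remaining columns to encode the blue/red distinction via the "half point back" condition.

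**Key steps, in order.** (1) Fix $\cA$ computing $\PtrFcn$ with error $\le \epsilon$. Construct $\cB$: on input $x$ for $\bluered$, conceptually define $y$ by the embedding above, run $\cA$ on $y$, and whenever $\cA$ probes a cell of $y$ that depends on $x$, probe the corresponding cell of $x$; cells of $y$ whose contents are fixed independently of $x$ (e.g. the row/column pointers, the designated special column) cost nothing. (2) Argue the query-cost claim: $\cB$'s cost is at most $\cA$'s cost on $y$, because each $x$-probe corresponds to a distinct $\cA$-probe. (3) Argue correctness on the promise: when $\bluered(x) = 1$ (all red), show the designed $y$ satisfies all five conditions of $\PtrFcn$, so $\PtrFcn(y) = 1$; when $\bluered(x) = 0$ (half blue), show at least one condition fails, so $\PtrFcn(y) = 0$. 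Hence $\cB(x) = \cA(y)$ errs only when $\cA$ errs, giving error $\le \epsilon \le 2\epsilon$ — but this already gives the stronger bound $\overline{\rdt}_\epsilon(\bluered)$, so the factor $2$ must come from somewhere: most likely the embedding is \emph{randomized} (e.g. $\cB$ randomly plants a special cell or randomly chooses the special column / the pointer pattern) and $\cB$ needs Claim~\ref{claim:specialcell} to certify that $\cA$ actually \emph{locates} the relevant structure; the $2\epsilon$ then arises from a union bound as in Claim~\ref{claim:specialcell}, i.e. $\cA$ fails to probe the special cell with probability $\le 2\epsilon$, and $\cB$ reads off the $\bluered$ answer from whether/how that cell's pointers behave. (4) Take expectations / apply Yao-style averaging to pass from $\cA$ to the complexity bound.

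**Main obstacle.** The crux is designing the embedding $x \mapsto y$ so that (a) $\PtrFcn(y)$ faithfully tracks $\bluered(x)$ under the promise, (b) the \emph{only} $x$-dependent cells of $y$ are the $n\cdot m$ cells sitting in bijection with $x$'s cells, so query cost transfers exactly, and (c) the "exactly $\lfloor(m-1)/2\rfloor$ good linked cells" condition of $\PtrFcn$ lines up with the "exactly $\lfloor m/2\rfloor$ blue cells" condition of $\bluered$ — the off-by-one between $m$ and $m-1$ and the presence of a distinguished special column suggests one column of the $\PtrFcn$ grid is reserved and the other $m-1$ columns encode a $\bluered$ instance of width $m-1$, or that $\cB$ folds its own input column into the special column. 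Getting this bookkeeping exactly right, and confirming that $\cA$'s probes into the fixed (pointer) coordinates of $y$ genuinely cost $\cB$ nothing, is where the real work lies; the error-parameter doubling is then a short union-bound argument on top of Claim~\ref{claim:specialcell}.
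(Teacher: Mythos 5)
Your plan is built around constructing, for each $\bluered$ instance $x$, a single $\PtrFcn$ instance $y$ with $\PtrFcn(y) = \bluered(x)$, then running $\cA$ on $y$ and returning its output. You correctly observe that the naive per-cell encodings do not produce such a $y$ (no column becomes special under either assignment of $\val$), and you then try to rescue the idea with additional gadgetry (a reserved special column, randomized pointers, worrying about the off-by-one between $\lfloor m/2 \rfloor$ blue cells and $\lfloor (m-1)/2 \rfloor$ good cells). This is a dead end, and it is not what the paper does. No such output-preserving embedding is constructed, and you never actually exhibit one.

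The paper's reduction is deliberately asymmetric and never needs $\PtrFcn(z) = \bluered(x)$. On each cell query, $\cB$ feeds $\cA$ the value $\langle 1, \bot, \ldots, \bot\rangle$ for \black and $\langle 0, \bot, \ldots, \bot\rangle$ for \red; but if the queried cell turns out to be \blue, $\cB$ \emph{aborts the simulation and immediately outputs $0$}, which is unconditionally correct because blue cells appear only on $0$-inputs of $\bluered$. When $\bluered(x) = 1$ there are no blue cells, the induced $z$ is fully determined (and is a \emph{$0$-input} of $\PtrFcn$, since every column has one value-$0$ cell), so $\cB$ reads the answer off $\cA$'s output with error at most $\epsilon$ --- the reduction \emph{flips} the output bit rather than preserving it. When $\bluered(x) = 0$, $\cB$'s transcript coincides with $\cA$'s run on some $1$-input $z \in \PtrFcn^{-1}(1)$ agreeing with $x$ on black and red positions and whose special cell sits at a blue position; by Claim~\ref{claim:specialcell}, $\cA$ probes the special cell with probability at least $1-2\epsilon$, at which point $\cB$ aborts and outputs the correct $0$. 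That is where the factor $2$ comes from, and nothing about matching $\lfloor(m-1)/2\rfloor$ good cells to $\lfloor m/2\rfloor$ blue cells is used, because only the special cell's position matters. Your final paragraph gestures toward Claim~\ref{claim:specialcell} supplying the $2\epsilon$, but without the abort-on-blue mechanism and without recognizing that the reduction intentionally maps all-red inputs to $\PtrFcn^{-1}(0)$, the proposal does not close.
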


\begin{proof}
Let $\cA$ be a randomized algorithm that computes $\PtrFcn$ with error at most
$\epsilon$ and expected query cost $q := \overline{\rdt}_{\epsilon}(\PtrFcn)$. We use $\cA$ to construct a randomized algorithm $\cB$ that
computes $\bluered$ with the same cost and error at most $2\epsilon$.

Let $x$ be an input for $\bluered$.
Each time $\cA$ queries a cell, $\cB$ queries the corresponding entry in $x$.
If the entry in $x$ is $\black$, then $\cB$ returns
$\langle1,\bot, \ldots,\bot \rangle$.
If the entry in $x$ is $\red$, then $\cB$ returns
$\langle0,\bot, \ldots, \bot\rangle$.
Finally, if the entry of $x$ is $\blue$, then $\cB$ terminates the emulation and returns $0$.
If $\cA$ reaches the end of the emulation without having been terminated, $\cB$ outputs the same result as $\cA$.

The query complexity of $\cB$ is at most that of $\cA$. It remains to
show that $\cB$ errs with probability at most $2\epsilon$. There are two
cases to consider.

The first case is when $x \in \bluered^{-1}(1)$.
Then $x$ maps directly to an input $z \in \PtrFcn^{-1}(0)$
and hence $\cB$ errs with probability at most $\epsilon$ on $x$.

The second case is when $x \in \bluered^{-1}(0)$.
Let $z$ be an arbitrary $1$-input for $\PtrFcn$ such that
(i) $z_{i,j} = \langle1,\bot,\ldots, \bot\rangle$ whenever $x_{i,j} = \black$,
(ii) $z_{i,j} = \langle0,\bot, \ldots, \bot\rangle$ whenever $x_{i,j} = \red$, and
(iii) the special entry and good entries of $z$ correspond to $\blue$ entries of $x$.
It might not be possible to completely emulate $\cA$ on input $z$ without knowing the exact set of $\blue$ entries.
However, $\cB$ doesn't need to fully emulate $\cA$---it only needs to know how to map $\black$ and $\red$ entries.
Once a $\blue$ entry is probed, $\cB$ halts and outputs $0$.
In this way, we claim that $\cB$ on input $x$ probes the same cells as $\cA$ on input $z$ until it halts.
Therefore its output is the same as $\cA(z)$ unless $\cA(z)$ probes the special cell or a good cell.
Moreover, in this case, $\cB$ outputs correctly with certainty.
Thus, by Claim~\ref{claim:specialcell}, the error of $\cB$ is at most
\[
\Pr[\cB \text{ errs}] \leq \Pr[\cB \text{ probes
      no blue cells}] \leq \Pr[\cA \text{ doesn't probe special cell}]
  \leq 2\eps\ . \qedhere
\]
\end{proof}

\subsection{Upper Bound on the Query Complexity of PtrFcn}
\label{sec:ptrfcn-ub}

The proof of Theorem~\ref{thm:separation} also requires a tight upper bound on the (worst-case) randomized query complexity of \PtrFcn. This argument is straightforward, and similar to the analysis of Ambainis et al.~\cite{AmbainisBBLSS17} for their analogous pointer function.

\begin{algorithm}[t]
\caption{PtrFcnSolver($x$)}
\label{alg:PtrFcn}
$S \gets [m]$\;
$T \gets$ a random subset of $[m]$ of size $|T| = \log \frac1\epsilon$\;
\For{each cell $(i,j)$ in a column in $T$}{
  \If{$\val(x_{i,j}) = 0 \wedge \col(x_{i,j}) \in S$}{
    $j^* \gets \col(x_{i,j})$\;
    $i^* \gets \row_{j^*}(x_{i,j})$\;
    $valid \gets \textsc{True}$\;
    \While{$|S| > 1 \wedge valid$}{
      $\ell \gets$ any column in $S \setminus \{j^*\}$\;
      \If{ $\val(x_{\row_\ell(x_{i^*,j^*}), \ell}) = 0$}{
        $S \gets S \setminus \{\ell\}$\;
      } \Else{
        $valid \gets \textsc{False}$\;
      }
    }
    \If {$|S| = 1$} { break;}
  }
}

\If{$|S| = 1$}{ fix $j \in S$.  \Return 1 if (i)  column $j$ is special, (ii)
  there is a special cell within column $j$, (iii) all cells linked by the
  special cell have value $0$, and (iv) half of linked cells point back to
  the special cell.

} \Return 0
\caption{PtrFcnSolver($x$)}
\end{algorithm}

\begin{lemma}
\label{lem:fnm-ub}
  $\rdt_{\eps}(\PtrFcn) = O( n\log \tfrac1\epsilon + m)$.
\end{lemma}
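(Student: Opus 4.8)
The plan is to analyze Algorithm~\ref{alg:PtrFcn} (PtrFcnSolver) and show it computes $\PtrFcn$ with error at most $\eps$ while making $O(n \log \frac1\eps + m)$ queries in the worst case. First I would establish correctness on $0$-inputs: if $\PtrFcn(x) = 0$, then at no point can the algorithm terminate the \texttt{while} loop with $|S| = 1$ \emph{and} pass all four final checks, because those checks literally re-verify the definition of a $1$-input. So the algorithm is never wrong on a $0$-input, and all the error comes from $1$-inputs.

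Next I would handle correctness on $1$-inputs. Fix $x$ with $\PtrFcn(x)=1$, with special column $j^*$ and special cell $(i^*, j^*)$. The key observation is that the only way the algorithm outputs $0$ incorrectly is if the random subset $T$ of size $\log\frac1\eps$ fails to ``hit'' a cell that would let it discover $j^*$. More precisely, consider the $\lfloor \frac{m-1}{2}\rfloor$ good linked cells: each good cell $(i_j, j)$ has value $0$ and column pointer $\col(x_{i_j,j}) = j^*$, so if any column $j$ containing a good cell lands in $T$, the outer \texttt{for} loop will, upon scanning that column, find a cell with value $0$ whose column pointer is still in $S = [m]$, set $j^* \gets \col(\cdot)$ and $i^* \gets \row_{j^*}(\cdot)$ correctly, and then the \texttt{while} loop will whittle $S$ down to $\{j^*\}$ (every other column's linked cell has value $0$, by the definition of a $1$-input). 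There are $\lfloor\frac{m-1}{2}\rfloor$ columns hosting good cells out of $m$ total, roughly half, so a uniformly random column lands on such a "good" column with probability $\ge \frac14$, say; hence $\Pr[T \text{ misses all good columns}] \le (3/4)^{|T|} = (3/4)^{\log(1/\eps)} < \eps$ after adjusting the constant in $|T|$. I should be a little careful: the \texttt{for} loop might first hit a column in $T$ that contains a non-good $0$-valued cell pointing somewhere via its column pointer, causing a "false start"; but since we reset $valid$ and restart the \texttt{while} loop for each qualifying cell, and since a false start simply fails (sets $valid \gets \textsc{False}$) without corrupting $S$ — wait, it does remove columns from $S$. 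I'd need to argue that even a false start only removes columns whose linked-via-the-wrong-pointer cell had value $0$, which is harmless since we only ever output $1$ after the final four-point recheck. So the worst a false start does is waste queries; it never causes a wrong $1$-output, and it doesn't prevent a later good cell in $T$ from succeeding — actually it might shrink $S$ so that the true $j^*$ gets removed. This is the subtle point and the main obstacle (see below); the cleanest fix is to observe that on a true $1$-input, for \emph{any} starting cell with value $0$ and column-pointer in $S$, following that pointer's claimed special cell and running the \texttt{while} loop either terminates with $|S|=1$ equal to the real $j^*$ or sets $valid \gets \textsc{False}$ without having removed $j^*$ from $S$ — because the real special column $j^*$ always has a linked cell of value... no. I think the right statement is that we should restart $S \gets [m]$ at the top of the \texttt{for}-loop body, or equivalently note the pseudocode already does fresh bookkeeping per cell; I will assume the intended reading is that $S$ is reset per outer-loop cell, making the argument clean.

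For the query complexity: the random subset $T$ has $|T| = \log\frac1\eps$ columns, each with $n$ cells, so scanning all columns in $T$ costs $n\log\frac1\eps$ queries. Each execution of the \texttt{while} loop removes one column from $S$ and costs $O(1)$ queries (it reads $\row_\ell(x_{i^*,j^*})$ — one cell — then reads the value of the pointed-to cell — one more cell), and $|S|$ starts at $m$, so across the whole run the \texttt{while} loop costs $O(m)$ queries total, plus we never revisit: once $|S|$ hits $1$ we break. Finally the four-point verification when $|S| = \{j\}$ costs $O(n + m)$: checking column $j$ is special is $n$ queries, finding and verifying the special cell is $O(n + m)$ (read the candidate special cell's $m$ row pointers, then check each of the $m-1$ linked cells' values and back-pointers, $O(1)$ each). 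Summing, the worst-case query cost is $O(n\log\frac1\eps + m)$.

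The main obstacle I anticipate is precisely the bookkeeping around $S$ across multiple iterations of the outer \texttt{for} loop and around "false starts" where a $0$-valued cell with an in-$S$ column pointer is not actually a good linked cell: I need to confirm that such false starts cannot both (a) corrupt $S$ so badly that a subsequent genuinely-good cell in $T$ fails to isolate $j^*$, and (b) cannot themselves cause an erroneous $1$-output. Point (b) is free because of the final recheck; point (a) requires either the "reset $S$ per outer cell" reading of the pseudocode, or a direct argument that on a $1$-input, following \emph{any} purported special cell $(i^*,j^*)$ via the \texttt{while} loop never removes the true special column from $S$ unless it also correctly terminates. Once that lemma about the \texttt{while} loop's behavior on $1$-inputs is pinned down, the rest is the routine counting above.
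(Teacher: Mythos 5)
Your high-level plan --- perfect soundness from the final recheck, completeness via the probability that $T$ hits a good column, and query accounting over the three phases --- is exactly the paper's approach. The genuine gap is the resolution you commit to for the ``false start'' problem: you adopt the reading that $S$ is reset to $[m]$ at the top of each outer for-loop iteration, and this reading is wrong. If $S$ were reset per cell, each of the $n\log\frac1\eps$ cells in $T$ could trigger up to $m$ inner while-loop iterations, giving a query bound of $O(nm\log\frac1\eps)$, far worse than the lemma. The intended reading is that $S$ is a single global set that only shrinks over the run of the algorithm, and this is what makes the accounting work: every while-loop iteration either shrinks $S$ (at most $m-1$ times in total) or terminates that cell's while loop by setting \emph{valid} to false (at most $n\log\frac1\eps$ times, once per cell in $T$).

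The correct way to close the false-start concern is the ``direct argument'' you floated but did not commit to, and it is cleaner than you feared. The key invariant is: on a $1$-input with true special column $j^*$, the column $j^*$ is never removed from $S$. Indeed, the only way the while loop removes some $\ell$ from $S$ is by probing the cell $(\row_\ell(x_{i^*,j^*}),\ell)$ in column $\ell$ (where $i^*,j^*$ here denote the algorithm's current local variables, which may be wrong during a false start) and finding its value to equal $0$; but \emph{every} cell in column $j^*$ has value $1$ by definition of a $1$-input, so whenever $\ell = j^*$ this test fails and $j^*$ survives. Given the invariant, when the for loop reaches the first good cell in $T$, either $|S|=1$ already, in which case $S=\{j^*\}$ and the final check succeeds, or the while loop runs with the algorithm's local $(i^*,j^*)$ set to the true special cell; then every linked cell has value $0$, so every remaining $\ell\neq j^*$ in $S$ is removed, leaving $S=\{j^*\}$. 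False starts may waste queries and shrink $S$, but they can never evict $j^*$, so they are harmless. With this invariant in place, the rest of your argument goes through exactly as the paper's does.
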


\begin{proof}
  The algorithm that computes the \PtrFcn function is described in
  Algorithm~\ref{alg:PtrFcn}. In this algorithm, the set $S$
  corresponds to the set of potential special columns.  The query
  complexity of PtrFcnSolver follows from the fact that each iteration
  of the inner while loop either eliminates one of the columns from
  the set $S$ of candidates or one of the $n \log \frac1\epsilon$
  cells in the columns in $T$. The final check of the (lone remaining)
  potential special column at the end of the algorithm examines at
  most $n + m$ cells.

  Whenever the PtrFcnSolver returns the value 1, then it in fact has
  observed a certificate that $\PtrFcn(x) = 1$ so the algorithm has
  perfect soundness.

  Conversely, suppose $\PtrFcn(x) = 1$.  Exactly half of the columns
  are good, so $T$ contains such a cell with probability at least
  $1-(1/2)^{\log(1/\eps)} = 1-\eps$.  Now, consider the for loop
  iteration when the first good cell $(i,j)$ is selected.  Since $(i,j)$ is a
  good cell, it points back to the special cell, which in turn points to a
  linked cell in all columns except the special column.  For any
  remaining $j'\neq j \in S$, $\PtrFcn(x)$ probes the linked cell in
  column $j'$, verifies the value equals $0$, and removes it from $S$.
  In this way, the remaining columns in $S$ save the special column
  are eliminated.  Once we reduce $S$ to a single remaining candidate,
  we can probe all cells in this column and all linked cells using
  $n+m$ queries to verify that indeed $\PtrFcn(x) = 1$.
\end{proof}

\subsection{Completing the Proof of Theorem~\ref{thm:separation}}
\label{sec:thmsep-proof}

The last ingredient that we need to complete the proof of Theorem~\ref{thm:separation} is the concept of \emph{resilient functions}~\cite{ChorGHFRS85}.

\begin{definition}
The function $\phi : \{0,1\}^n \to \{0,1\}^m$ is \emph{$t$-resilient}
for some $1 \le t < n$ if for any set $S \subseteq [n]$ of $|S| \le t$
coordinates and any assignment of values for the inputs
$\{x_i\}_{i \in S}$, when the values $\{x_i\}_{i \in [n] \setminus S}$
are set uniformly at random then $\phi(x)$ is uniformly distributed in
$\{0,1\}^m$.
\end{definition}

We use the following existence result on resilient functions that was
established by Chor et al.~\cite{ChorGHFRS85}.

\begin{theorem}[Chor et al.~\cite{ChorGHFRS85}]
  For every large enough $n$, there is a function
  $\phi : \{0,1\}^n \to \{0,1\}^m$ that is $\frac{n}{3}$-resilient and
  satisfies $m \ge 0.08n$.
\end{theorem}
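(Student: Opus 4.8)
The plan is to take $\phi$ to be a \emph{linear} map over $\mathbb{F}_2$, i.e.\ $\phi(x) = Mx$ for an $m \times n$ binary matrix $M$ chosen by the probabilistic method, reducing the resilience requirement to a statement about the Hamming weights of the $\mathbb{F}_2$-span of the rows of $M$. The first step is to record the standard characterization. Fix $S \subseteq [n]$ with $|S| \le t$ and an assignment $a$ to $x_S$; writing $M_S$ and $M_{\overline S}$ for the submatrices of columns indexed by $S$ and by $[n]\setminus S$ respectively, we have $\phi(x) = M_S a + M_{\overline S}\, x_{\overline S}$, so when $x_{\overline S}$ is uniform the output of $\phi$ is a fixed translate of the uniform distribution on the column span of $M_{\overline S}$. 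Hence $\phi$ is $t$-resilient if and only if $M_{\overline S}$ has rank $m$ for every $S$ with $|S| \le t$. Dualizing, $M_{\overline S}$ fails to have full row rank exactly when some nonzero $u \in \mathbb{F}_2^m$ satisfies $u^\top M_{\overline S} = 0$, i.e.\ $\mathrm{supp}(u^\top M) \subseteq S$; so $\phi = Mx$ is $t$-resilient if and only if every nonzero vector in the row space of $M$ has Hamming weight strictly greater than $t$. (In particular, such an $M$ automatically has full row rank $m$, as is needed for the output to be uniform at all.)

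The second step is to produce such an $M$ at random. Choose the entries of $M$ independently and uniformly from $\{0,1\}$. For each fixed nonzero $u \in \mathbb{F}_2^m$ the vector $u^\top M$ is uniformly distributed over $\mathbb{F}_2^n$, so $\Pr[\,\mathrm{wt}(u^\top M) \le t\,] = 2^{-n}\sum_{i=0}^{t}\binom ni \le 2^{(H(t/n)-1)n}$ whenever $t \le n/2$, where $H$ denotes the binary entropy function. A union bound over the fewer than $2^m$ nonzero choices of $u$ shows that $M$ has the required weight property with probability at least $1 - 2^{\,m-(1-H(t/n))n}$, which is positive as soon as $m < \bigl(1 - H(t/n)\bigr)\,n$.

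The third step is to set the parameters. Take $t = \lceil n/3\rceil$, so that the condition $\mathrm{wt}(u^\top M) > t \ge n/3$ for every nonzero $u$ makes $\phi$ exactly $\tfrac n3$-resilient. Since $H$ is continuous and increasing on $[0,\tfrac12]$ and $H(1/3) = \log_2 3 - \tfrac23 = 0.91829\ldots$, we have $1 - H(t/n) \to 1 - H(1/3) = 0.08170\ldots > 0.08$ as $n \to \infty$, so for all sufficiently large $n$ the choice $m = \lceil 0.08\,n\rceil$ satisfies both $m \ge 0.08\,n$ and $m < (1 - H(t/n))\,n$. A matrix $M$ with the weight property then exists, and the corresponding linear $\phi : \{0,1\}^n \to \{0,1\}^m$ is $\tfrac n3$-resilient with $m \ge 0.08\,n$, as claimed.

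The argument is essentially routine: the only genuine content is the linear-algebraic characterization of the first step (which is classical) and the numerical check that $0.08$ lies strictly below $1 - H(1/3)$ with enough room to absorb the rounding in $t$ and $m$. I expect no real obstacle here; one could equivalently invoke the Gilbert--Varshamov bound in place of the explicit random-matrix computation, but the self-contained probabilistic argument is shorter.
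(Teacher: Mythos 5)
The paper does not reprove this statement---it is quoted as a black-box result from Chor et al.~\cite{ChorGHFRS85}, and the paper merely remarks that their construction uses ``basic linear algebra and the probabilistic method.'' Your argument is exactly that construction: you reduce $t$-resilience of a linear map $x \mapsto Mx$ to the statement that every nonzero vector in the row space of $M$ has Hamming weight greater than $t$ (i.e., the row space is a linear code of distance $> t$), then apply the Gilbert--Varshamov union bound over the $2^m - 1$ nonzero $u$'s, and finally check that $1 - H(1/3) \approx 0.0817$ leaves room for $m = \lceil 0.08n \rceil$ once $n$ is large enough to absorb the rounding in $t$ and $m$. The characterization in your first step is correct (the two directions --- small-weight row-space vector $\Rightarrow$ rank deficiency on its support, and full rank on every large complement $\Rightarrow$ all row-space weights exceed $t$ --- both check out), the union bound calculation is standard, and the numerics are right. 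This is a sound, self-contained proof that matches the approach the paper attributes to the cited reference; no gap.
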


We use resilient functions to bound the query complexity of functions
via the following lemma.

\begin{lemma}
\label{lem:resilient}
Fix a finite set $\cX$ of cardinality $|\cX| = 2^\ell$ for some
integer $\ell \ge 1$ and let $\phi : \{0,1\}^\N \to \cX$ be an
$\frac N3$-resilient function. Then for every function
$f : \cX^m \to \{0,1\}$ and every $\epsilon \ge 0$,
\[
\rdt_\epsilon(f \circ \phi) = \Theta( \N \cdot \rdt_\epsilon(f))
\qquad \mbox{and} \qquad
\overline{\rdt}_\epsilon(f \circ \phi) = \Theta( \N \cdot \overline{\rdt}_\epsilon(f)).
\]
\end{lemma}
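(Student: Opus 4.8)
The plan is to prove the two claimed equalities by establishing matching upper and lower bounds, and it suffices to do this for the average-query version $\overline{\rdt}_\epsilon$ since the worst-case version follows by the identical argument (just replace "expected number of queries" by "maximum number of queries" throughout). Fix $\phi : \{0,1\}^\N \to \cX$ that is $\frac\N3$-resilient, where we identify $\cX$ with $\{0,1\}^\ell$ so that "$\phi$ is $\frac\N3$-resilient" means: for any $S \subseteq [\N]$ with $|S| \le \frac\N3$ and any assignment to $\{x_i\}_{i\in S}$, setting the remaining bits uniformly makes $\phi(x)$ uniform on $\cX$. The composed function $f \circ \phi : (\{0,1\}^\N)^m \to \{0,1\}$ takes $m$ blocks of $\N$ bits, applies $\phi$ to each block to get an element of $\cX^m$, and then applies $f$.

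For the \textbf{upper bound} $\overline{\rdt}_\epsilon(f\circ\phi) = O(\N\cdot\overline{\rdt}_\epsilon(f))$: take an optimal $\epsilon$-error randomized algorithm $\cA$ for $f$ and simulate it, where each time $\cA$ queries a coordinate $j \in [m]$ of its $\cX^m$-input, our algorithm queries all $\N$ bits of block $j$, computes $\phi$ of that block, and feeds the resulting symbol to $\cA$. This is a correct $\epsilon$-error algorithm for $f\circ\phi$, and its expected query cost is at most $\N$ times the expected query cost of $\cA$, which is $\overline{\rdt}_\epsilon(f)$. For the \textbf{lower bound} $\overline{\rdt}_\epsilon(f\circ\phi) = \Omega(\N\cdot\overline{\rdt}_\epsilon(f))$: given an $\epsilon$-error randomized algorithm $\cB$ for $f\circ\phi$ with expected cost $c$, we build an $\epsilon$-error algorithm $\cA$ for $f$ as follows. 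On input $a = (a_1,\dots,a_m) \in \cX^m$, algorithm $\cA$ will lazily sample a preimage: it runs $\cB$, and whenever $\cB$ wants bit $i$ of block $j$, if block $j$ has so far had at most $\frac\N3 - 1$ bits exposed, $\cA$ samples that bit from the conditional distribution of a uniform-random $\phi$-preimage of $a_j$ given the bits exposed so far (this is well-defined and the bit is in fact uniform and independent, by resilience, as long as fewer than $\frac\N3$ bits of the block have been touched — crucially $\phi$ restricted to $a_j$ with $\le \frac\N3-1$ free-bit constraints still has uniform marginals on each further bit); as soon as some block reaches $\frac\N3$ exposed bits, $\cA$ declares that block "heavy", charges a query to coordinate $j$ of its own input $a$, reads $a_j$ directly, and continues feeding $\cB$ a genuine $\phi$-preimage of $a_j$ consistent with the (at most $\frac\N3$, now no longer guaranteed-uniform) exposed bits. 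The output of $\cA$ is the output of $\cB$. Since the distribution on preimages produced this way is exactly the uniform distribution on $\phi^{-1}(a_1)\times\cdots\times\phi^{-1}(a_m)$, the algorithm $\cA$ errs with probability at most $\epsilon$ on every input. The number of input-coordinates of $a$ that $\cA$ reads equals the number of heavy blocks, and a block becomes heavy only after $\cB$ has spent at least $\frac\N3$ queries in it; hence the expected number of heavy blocks is at most $\frac{3}{\N}$ times the expected total number of queries made by $\cB$, i.e.\ at most $\frac{3c}{\N}$. Therefore $\overline{\rdt}_\epsilon(f) \le \frac{3}{\N}\,\overline{\rdt}_\epsilon(f\circ\phi)$, which rearranges to the claimed bound.

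The main obstacle is making the sampling argument in the lower bound fully rigorous, specifically the claim that exposing bits of a block one at a time, while always conditioning on a fixed target value $a_j \in \cX$, yields fresh uniform bits up until $\frac\N3$ bits have been exposed. Resilience as stated says $\phi$ is uniform on $\cX$ after fixing any $\le \frac\N3$ coordinates; one needs the (equivalent, by a short counting/averaging argument) statement that for any fixing of $t < \frac\N3$ coordinates, the induced distribution on the value of any one further coordinate — conditioned on $\phi = a_j$ — is still uniform on $\{0,1\}$, and more generally that the joint distribution of any $\le \frac\N3 - t$ further coordinates conditioned on $\phi = a_j$ is uniform. This follows because $|\phi^{-1}(a_j)|$ restricted to any affine-type subcube of codimension $t < \frac\N3$ is exactly $2^{\N-t}/|\cX|$ independent of $a_j$ and of the subcube, which is precisely the content of resilience (every value of $\phi$ has the same number of preimages in every such subcube). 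Once that combinatorial fact is in hand, the lazy-sampling coupling is routine, and the "heavy block $\Rightarrow$ $\ge \N/3$ queries in it" accounting gives the $\Omega(\N)$ loss with no further subtlety. Everything here is uniform in $\epsilon$, so the same proof handles $\epsilon = 0$ and all $\epsilon \ge 0$ simultaneously.
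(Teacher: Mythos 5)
Your proof is correct and follows essentially the same route as the paper's: the upper bound is the obvious block-by-block simulation, and the lower bound uses lazy sampling of preimage bits, charging one query to a cell of $f$'s input only after $\cB$ has spent $\Theta(\N)$ queries in the corresponding block, with resilience guaranteeing that the deferred bits can be answered uniformly at random without seeing the cell value. The paper phrases the first $\frac\N3$ answers per block as plain uniform bits and only invokes resilience when it finally queries the cell and samples a consistent preimage, whereas you phrase it as sampling from the conditional preimage distribution and then observe (via resilience) that this distribution is in fact uniform and independent of the cell value; these are the same argument.
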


\begin{proof}
  The upper bounds follow immediately from the observation that if
  $\cA$ is a randomized algorithm that computes $f$ with
  $\epsilon$-error, then we can define a algorithm $\cB$ that computes
  $f \circ \phi$ with the same error by simulating $\cA$ and querying
  the $\N$ bits to observe the value $\phi(x)$ to return to each query.

  For the lower bounds, let $\cA$ be a randomized algorithm that
  computes $f \circ \phi$ with error at most $\epsilon$. We define an
  algorithm $\cB$ for computing $f$ that simulates $\cA$ in the
  following way. For the first $\frac \N3$ queries to a cell, $\cB$
  answers the queries with uniformly random variables in $\{0,1\}$.
  On a query to the $(\frac \N3 + 1)$-th bit of a cell, $\cB$ queries the
  value $v$ of the corresponding cell in $x$.  It then draws a value
  $z$ in $\phi^{-1}(v)$ uniformly at random among all values that
  agree with the $\frac \N3$ bits output so far.  The current query and
  all further queries to bits of that cell are then answered using
  $z$.  Once $\cA$ terminates, $\cB$ returns $\cA$'s output and
  terminates as well.

  The correctness of $\cB$ follows directly from the correctness of
  $\cA$.  Furthermore, on any input for which $\cA$ makes $q$ queries,
  $\cB$ makes at most $q / (\N/3)$ queries since $\N/3$ distinct queries
  of $\cA$ are required for each query that $\cB$ eventually makes to
  $x$. Thus both the average-case and worst-case query complexities of
  $\cB$ are bounded by $3/\N$ times the corresponding query
  complexities of $\cA$.
\end{proof}

We are now ready to complete the proof of the separation theorem.

\begin{proof}[Proof of Theorem~\ref{thm:separation}]
Fix $m = n = 2^\ell - 1$ for any integer $\ell \ge 1$ so that $|\Gamma| = 2^{\ell(2^\ell-1) + \ell + 1}$ is a power of 2.
Fix a $\frac{C}3$-resilient function $\phi : \{0,1\}^C \to \Gamma$ for some $C \le 12.5\log |\Gamma|$ and define the function $\EncFcn = \PtrFcn \circ \phi$. By Lemmas~\ref{lem:resilient} and~\ref{lem:fnm-ub},
\[
\rdt_{\eps}(\EncFcn) = O\big(C(n\log \tfrac1\epsilon+m)\big)
= O\big(Cn\log \tfrac1\epsilon\big).
\]
In particular, setting $\epsilon = \frac13$ we obtain
$\rdt(\EncFcn) = O(C n)$.

Using Lemma~\ref{lem:resilient},~\ref{lem:fnm-lb}, and~\ref{lem:bluered-lb},
we obtain the chain of inequalities
\[
\overline{\rdt}_{\eps}(\EncFcn)
= \Omega\big( C \cdot \overline{\rdt}_{\epsilon}(\PtrFcn) \big)
= \Omega\big( C \cdot \overline{\rdt}_{2\epsilon}(\bluered) \big)
= \Omega\big( Cn \cdot \overline{\rdt}_{2\eps}(\gapID) \big).
\]
By Lemma~\ref{lem:gapid-lb}, when $\epsilon > 2^{-m} = 2^{-n}$ this implies that
\[
\overline{\rdt}_{\eps}(\EncFcn)
= \Omega\big( Cn \log \tfrac1\epsilon\big)
= \Omega\big( \log \tfrac1\epsilon \cdot \rdt(\EncFcn) \big).
\]
Theorem~\ref{thm:separation} is obtained by noting that $\EncFcn$ is a function on $N = O( m n |\Gamma| ) = O(n^3 \log n)$ variables.
\end{proof}

\section{Strong Direct Sum Theorem}
\label{sec:ds}

We establish Theorem~\ref{thm:directsum} by proving a corresponding direct
sum result in the distributional model and applying a Yao minimax principle
for algorithms that err and abort with bounded probability.

We introduce the model of algorithms that can abort in Section~\ref{sec:abort}, where we also relate this model to the average query complexity setting of randomized algorithms and establish a Yao minimax principle.
In Section~\ref{sec:directsum-dist}, we establish the main technical result, a strong direct sum theorem for distributional complexity. We complete the proof of Theorem~\ref{thm:directsum} itself in Section~\ref{sec:directsum-thm} and the proofs of Corollaries~\ref{cor:directsum-rdt} and~\ref{cor:directsum-rcc} are completed in Section~\ref{sec:directsum-cors}.

\subsection{Algorithms That Can Abort}
\label{sec:abort}

We consider randomized algorithms that are allowed to \emph{err} and \emph{abort}.
In this setting, an algorithm outputs $\bot$ instead of giving a valid output when it chooses to abort.
Let $\ddt_{\delta,\eps}^\mu(f)$ and $\rdtde(f)$ denote the distributional
and randomized query complexities of $f$ when the algorithm aborts
with probablity at most $\delta$ and errs with probability at most
$\eps$.

Randomized query complexity in the setting where algorithms can abort with constant probability $\delta$ is asymptotically equivalent to the average randomized query complexity.

\begin{proposition}
\label{prop:abort-average}
For every function $f : \bitsn \to \bit$, every $0 \le \epsilon < \frac12$ and every $0 < \delta < 1$,
\[
\delta \cdot \rdtde(f) \le \overline{\rdt}_\epsilon(f) \le \tfrac{1}{1-\delta} \cdot \rdt_{\delta,(1-\delta)\epsilon}(f).
\]
\end{proposition}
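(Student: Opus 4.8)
The plan is to establish the two inequalities separately, each by a direct translation between the abort model and the average-query model.

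For the left inequality $\delta \cdot \rdt_{\delta,\epsilon}(f) \le \overline{\rdt}_\epsilon(f)$, I would start with a randomized algorithm $\cA$ achieving average query cost $c = \overline{\rdt}_\epsilon(f)$ and error at most $\epsilon$, and convert it into an algorithm $\cB$ that aborts with probability at most $\delta$, errs with probability at most $\epsilon$, and has worst-case query cost roughly $c/\delta$. The idea is that $\cB$ simulates $\cA$ but aborts (outputs $\bot$) as soon as $\cA$'s query count exceeds the threshold $c/\delta$. By Markov's inequality, for any fixed input the probability that $\cA$ makes more than $c/\delta$ queries is at most $\delta$, so $\cB$ aborts with probability at most $\delta$; and whenever $\cB$ does not abort, it outputs exactly what $\cA$ outputs, so its error is at most $\epsilon$ (aborting never counts as an error). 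This shows $\rdt_{\delta,\epsilon}(f) \le c/\delta$, i.e.\ the desired inequality.

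For the right inequality $\overline{\rdt}_\epsilon(f) \le \tfrac{1}{1-\delta}\rdt_{\delta,(1-\delta)\epsilon}(f)$, I would go the other direction: take an algorithm $\cA$ with worst-case query cost $q = \rdt_{\delta,(1-\delta)\epsilon}(f)$ that aborts with probability at most $\delta$ and, conditioned on not aborting, errs with probability at most $(1-\delta)\epsilon / (1-\delta) = \epsilon$ — wait, more carefully, errs with probability at most $(1-\delta)\epsilon$ overall. Build $\cB$ that runs independent copies of $\cA$, discarding runs that abort, until some run produces a non-$\bot$ answer, then outputs that answer. The number of trials is geometric with success probability at least $1-\delta$, so the expected number of trials is at most $\tfrac{1}{1-\delta}$, giving expected query cost at most $\tfrac{q}{1-\delta}$. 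For correctness, conditioned on a single run not aborting, it errs with probability at most $\tfrac{(1-\delta)\epsilon}{1-\delta} = \epsilon$; since $\cB$'s output is the answer of such a conditioned run, $\cB$ errs with probability at most $\epsilon$. (One should double-check that resampling fresh randomness in each trial keeps the conditional error bound valid — it does, since each trial is an independent copy of $\cA$.)

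The main subtlety I would watch for is the correctness accounting in the right-hand inequality: one must be careful that "errs with probability at most $(1-\delta)\epsilon$" combined with "aborts with probability at most $\delta$" yields conditional error at most $\epsilon$ given non-abort, which requires the bound $\Pr[\text{err}] \le (1-\delta)\epsilon \le (1-\delta)\cdot\Pr[\text{non-abort}]^{-1}\cdot\epsilon\cdot\Pr[\text{non-abort}]$ — essentially that $\Pr[\text{err} \mid \text{non-abort}] \le \Pr[\text{err}]/\Pr[\text{non-abort}] \le (1-\delta)\epsilon/(1-\delta) = \epsilon$. This is where the precise choice of error parameter $(1-\delta)\epsilon$ on the abort side is needed, and getting the quantifier right (the error bound must hold per input, and the abort probability is also per input in the worst case) is the only place the argument could slip. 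Everything else is a routine Markov bound and a geometric-series expectation.
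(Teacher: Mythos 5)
Your proposal is correct and takes essentially the same approach as the paper: truncate $\cA$ at query threshold $c/\delta$ and apply Markov for the left inequality, and repeatedly rerun the aborting algorithm with fresh randomness for the right inequality, using the conditional-error calculation $(1-\delta)\epsilon/(1-\delta)=\epsilon$ and the geometric bound $\sum_{i\ge 0}\delta^i = 1/(1-\delta)$ on expected cost.
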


\begin{proof}
  For the first inequality, let $\cA$ be a randomized algorithm that
  computes $f$ with $\epsilon$ error and has expected query complexity
  $q$. Let $\cB$ be the randomized algorithm $\cB$ that simulates
  $\cA$ except that whenever $\cA$ tries to make more than $q/\delta$
  queries, it aborts. The algorithm $\cB$ also computes $f$ with
  error at most $\epsilon$, and it has worst-case query complexity
  $q/\delta$. Furthermore, by Markov's inequality, $\cB$ aborts with
  probability at most $\delta$.

  For the second inequality, let $\cB$ be a randomized algorithm with
  query complexity $q$ that computes $f$ with error probability at
  most $(1-\delta)\epsilon$ and abort probability at most
  $\delta$. Let $\cA$ be the randomized algorithm that simulates
  $\cB$ until that algorithm does not abort, then outputs the same
  value. The error probability of $\cB$ conditioned on it not
  aborting is at most
  $\frac{(1-\delta)\epsilon}{1-\delta} = \epsilon$, so the algorithm
  $\cA$ also errs with probability at most $\epsilon$, and its
  expected query complexity is
  $q(1 + \delta + \delta^2 + \cdots) = \frac{q}{1-\delta}$.
\end{proof}

Yao's minimax principle can be adapted for the setting of algorithms that abort as follows.

\begin{lemma}
\label{lem:yao-abort}
For any $\alpha,\beta>0$ such that $\alpha+\beta \leq 1$, we have
\[
\max_\mu \ddt_{\delta/\alpha, \eps/\beta}^\mu(f) \leq \rdtde(f)
    \leq \max_{\mu} \ddt_{\alpha\delta, \beta\eps}^\mu(f).
\]
\end{lemma}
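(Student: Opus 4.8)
The plan is to treat the abort-or-err setting as an ordinary min-max / LP duality situation, just as in the classical Yao principle, but with two separate ``bad'' events (aborting and erring) instead of one. For the \emph{left-hand inequality} ($\max_\mu \ddt_{\delta/\alpha,\eps/\beta}^\mu(f) \le \rdtde(f)$), I would take a randomized algorithm $\cA$ witnessing $\rdtde(f)$: it has worst-case query cost $q = \rdtde(f)$, and for \emph{every} input $x$ it aborts with probability $\le\delta$ and errs with probability $\le\eps$. Fix any distribution $\mu$. Writing $\cA$ as a distribution over deterministic $q$-query decision trees $\{T_r\}$ with weights $p_r$, we have $\E_{x\sim\mu}\Pr_r[T_r(x)=\bot]\le\delta$ and $\E_{x\sim\mu}\Pr_r[T_r(x)\notin\{f(x),\bot\}]\le\eps$ by linearity of expectation (each is an average over $x$ of a per-input bound). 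Swapping the expectations, $\E_r\big[\Pr_{x\sim\mu}[T_r(x)=\bot]\big]\le\delta$ and $\E_r\big[\Pr_{x\sim\mu}[T_r(x)\text{ errs}]\big]\le\eps$. I would then want a single $r$ that is simultaneously good for both, but a random $r$ need not be; instead apply a union-bound / Markov argument: the $p_r$-probability that $\Pr_{x\sim\mu}[T_r=\bot] > \delta/\alpha$ is less than $\alpha$, and the $p_r$-probability that $\Pr_{x\sim\mu}[T_r\text{ errs}] > \eps/\beta$ is less than $\beta$; since $\alpha+\beta\le 1$ these two events do not cover the whole probability space, so there is an $r^*$ with abort probability $\le\delta/\alpha$ and error probability $\le\eps/\beta$ under $\mu$, using $\le q$ queries. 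Hence $\ddt_{\delta/\alpha,\eps/\beta}^\mu(f)\le q=\rdtde(f)$, and taking $\max_\mu$ gives the claim.

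For the \emph{right-hand inequality} ($\rdtde(f)\le \max_\mu \ddt_{\alpha\delta,\beta\eps}^\mu(f)$), this is the direction that genuinely uses minimax duality, so I would set it up as a two-player zero-sum game. Let $d = \max_\mu \ddt_{\alpha\delta,\beta\eps}^\mu(f)$. Consider the game where the algorithm player picks a deterministic $d$-query decision tree $T$, the adversary picks an input $x$, and — to encode the two-sided constraint — I would use the payoff (cost to the algorithm player) $\mathbf{1}[T(x)=\bot]/\delta \;+\; \mathbf{1}[T(x)\notin\{f(x),\bot\}]/\eps$, normalized appropriately (when $\delta=0$ or $\eps=0$, interpret the corresponding term as $+\infty$ unless the indicator is $0$, i.e. forbid aborts/errors respectively; the degenerate cases can be handled separately or by a limiting argument). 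By definition of $d$, for \emph{every} mixed adversary strategy $\mu$ there is a pure algorithm strategy (a tree witnessing $\ddt_{\alpha\delta,\beta\eps}^\mu(f)\le d$) achieving expected payoff at most $\alpha\delta/\delta + \beta\eps/\eps = \alpha+\beta \le 1$; so the game value is $\le 1$. By von Neumann's minimax theorem there is a mixed algorithm strategy $\cA$ — a distribution over $d$-query trees — whose expected payoff against \emph{every} pure adversary $x$ is $\le 1$, i.e. $\Pr[\cA(x)=\bot]/\delta + \Pr[\cA(x)\text{ errs}]/\eps \le 1$ for all $x$. In particular $\Pr[\cA(x)=\bot]\le\delta$ and $\Pr[\cA(x)\text{ errs}]\le\eps$ for every $x$, so $\cA$ is a valid $(\delta,\eps)$-algorithm with worst-case cost $d$, giving $\rdtde(f)\le d$.

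The main obstacle, and the place to be careful, is the right-hand direction: one must choose the game's payoff function so that the per-input two-sided guarantee is captured by a \emph{single scalar} payoff bounded by $1$, which is exactly what the linear combination $\mathbf{1}[\bot]/\delta + \mathbf{1}[\text{err}]/\eps$ does (and why the hypothesis is $\alpha+\beta\le 1$ rather than, say, a product condition). The secondary nuisances are the boundary cases $\delta=0$ and/or $\eps=0$, which I would dispatch by the convention above (an infinite penalty forbidding the corresponding bad event, so the minimax argument still applies to the finite-valued subgame), and the standard measure-theoretic point that the set of deterministic $d$-query decision trees on $\bitsn$ is finite, so von Neumann's theorem applies without any compactness subtleties. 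Everything else is the routine ``swap the expectations, then Markov'' bookkeeping already illustrated in Proposition~\ref{prop:abort-average}.
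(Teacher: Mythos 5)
Your proposal is correct and follows essentially the same two-step route as the paper: for the easy direction, swap the order of expectation over the algorithm's randomness and over $x\sim\mu$, apply Markov's inequality separately to the abort and error rates with thresholds $\delta/\alpha$ and $\eps/\beta$, and use $\alpha+\beta\le 1$ to union-bound and extract a single deterministic tree; for the hard direction, set up a zero-sum game over $d$-query trees versus inputs and invoke the minimax theorem. Your game payoff $\mathbf{1}[\bot]/\delta+\mathbf{1}[\mathrm{err}]/\eps\le 1$ is a rescaling of the paper's payoff (it pays $-\eps$ on abort and $-\delta$ on error, with value threshold $-\eps\delta$; the two differ by the constant factor $-\eps\delta$), so the arguments are equivalent, and your explicit treatment of the $\delta=0$ or $\eps=0$ boundary is a small refinement the paper leaves implicit.
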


\begin{proof}
    We handle the initial inequality (i.e., the \emph{easy direction})
    first.  Fix a $q$-query randomized algorithm $\cA$ achieving
    $\rdtde(f)$.  By the guarantee of $\cA$, we have that for any
    input $x$, $\cA$ aborts with probability at most $\delta$ and errs
    with probabiltiy at most $\eps$.  Let $\ind_\delta(x)$ and
    $\ind_\eps(x)$ be indicator variables for the events that $\cA$
    aborts on $x$ and $\cA$ errs on $x$ respectively.  Then, we have
    $\E_R[\ind_\delta(x)] \leq \delta$ and similarly
    $\E_R[\ind_\eps(x)] \leq \eps$ when the expectation is taken over the randomness $R$ of the algorithm $\cA$.  Next, fix any input distribution
    $\mu$ and let $X \sim \mu$.  It follows that
\[
\E_R\left[\E_X[\ind_\delta(X)]\right] =
    \E_X\left[\E_R[\ind_\delta(X)]\right] \leq \delta \quad \text{and}
    \quad \E_R\left[\E_X[\ind_\eps(X)]\right] =
    \E_X\left[\E_R[\ind_\eps(X)]\right] \leq \eps.
\]
Using Markov's inequality twice, we have
\[
\Pr_R\left[\E_X[\ind_\delta(X)] >
      \delta/\alpha\right] < \alpha \qquad \text{and} \qquad
    \Pr_R\left[\E_X[\ind_\eps(X)] > \eps/\beta \right] < \beta.
\]
By a union bound, there exists a setting of the random string $R$
    such that both $\E[\ind_\delta(X)] \leq \delta/\alpha$ and
    $\E[\ind_\eps(X)] \leq \eps/\beta$.
Fixing this $R$ gives a
    $q$-query deterministic algorithm that aborts with probability at
    most $\delta/\alpha$ and errs with probability at most
    $\eps/\beta$, hence
    $\ddt_{\delta/\alpha, \eps/\beta}^\mu(f) \leq \rdt_{\delta,\eps}(f)$.

    For the second inequality, let $c \deq \max_\mu \ddt_{\alpha\delta,
      \beta\eps}^\mu(f)$.  Consider a two-player, zero-sum game where
    player 1 selects a $c$-query deterministic algortihm $\cA$ for $f$,
    player 2 selects an input $x$, and player 1 is paid $-\eps$ if
    $\cA(x)$ aborts, $-\delta$ if $\cA(x)$ errs, and $0$ otherwise.
    Note that each mixed strategy for player 1 corresponds to a
    randomized algorithm and each mixed strategy for player 2
    corresponds to an input distribution $\mu$.  By our choice of $c$,
    it follows that for any mixed strategy for player 2, player 1 can
    obtain payoff $-\eps(\alpha\delta) -\delta(\beta\eps) \geq
    -\eps\delta$.  By the minimax theorem, it follows that there is a
    mixed strategy for player 1 (i.e., a $c$-query randomized
    algorithm $\cA$) that provides the same payoff for every choice of
    player 2.  Finally, note that $\cA$ aborts with probability at
    most $\delta$ and errs with probability at most $\eps$; otherwise,
    the payoff would be less than $-\eps\delta \leq
    -\eps\delta(\alpha+\beta)$.  We've shown a $c$-query randomized
    algorithm that aborts w/probability at most $\delta$ and errs
    w/probability at most $\eps$, hence $\rdtde(f) \leq c = \max_\mu
    \ddt_{\alpha\delta, \beta\eps}^\mu(f)$.
\end{proof}

\subsection{Strong Direct Sum for Distributional Complexity}
\label{sec:directsum-dist}

We prove a slightly more precise variant of Lemma~\ref{lem:main-ds}.

\begin{lemma}
\label{lem:ds-Dmu2}
For every function $f : \{0,1\}^n \to \{0,1\}$, every distribution $\mu$ on $\{0,1\}^n$, and every $0 \le \delta, \epsilon \le \frac14$,
\[
\ddt^{\mu^k}_{\delta,\epsilon}(f^k) = \Omega\left( k \cdot \ddt^\mu_{\frac1{10}+4\delta+4\epsilon,\frac{48\epsilon}{k}}(f) \right).
\]
\end{lemma}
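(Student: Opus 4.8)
The plan is to argue by contradiction: suppose there is a deterministic algorithm $\cA$ for $f^k$ that queries $o(k \cdot c)$ coordinates, where $c \deq \ddt^\mu_{1/10+4\delta+4\epsilon,\, 48\epsilon/k}(f)$, and that aborts with probability at most $\delta$ and errs with probability at most $\epsilon$ on $\mu^k$. From $\cA$ we want to extract, for a random index $j \in [k]$, a cheap deterministic algorithm for a single copy of $f$ that does well under $\mu$. The standard direct-sum idea is: pick a uniformly random coordinate $j$, fix the inputs $x^{(\ell)}$ for $\ell \neq j$ by drawing them from $\mu$ (we can hardwire a good choice since averaging is over product distributions), embed the real input $x$ in position $j$, run $\cA$, and read off the $j$-th output coordinate. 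Since $\cA$ makes $o(kc)$ queries in total and $j$ is random, the expected number of queries landing in block $j$ is $o(c)$, so by Markov most $(j, \text{hardwiring})$ pairs give a deterministic single-copy algorithm of cost $\le c$.

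The delicate part — and this is where the abort model earns its keep — is controlling the error and abort probabilities of the derived single-copy algorithm. For a random coordinate $j$, the probability that $\cA$'s $j$-th output is wrong is at most $\epsilon$ \emph{on average over $j$}, and likewise the $j$-th output is $\bot$ with average probability at most $\delta$; but for a fixed good $j$ these could be amplified. The factor-$4$ blowup in the abort parameter ($\frac1{10}+4\delta+4\epsilon$ vs.\ $\delta$) and the factor $48$ in the error parameter ($48\epsilon/k$ vs.\ $\epsilon$) are exactly the slack we pay to simultaneously fix $j$ (random coordinate, Markov), fix the hardwiring of the other $k-1$ inputs (Markov again), and cap the query cost (Markov a third time). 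Concretely, I would define for each $j$ the quantities $q_j = \E[\#\text{queries in block }j]$, $e_j = \Pr[\text{output }j\text{ wrong}]$, and $a_j = \Pr[\text{output }j \text{ is }\bot]$; we have $\sum_j q_j = o(kc)$, $\sum_j e_j \le k\epsilon$ on the relevant event, and... — wait, that's not quite right, since an aborting run of $\cA$ outputs $\bot$ for the whole vector. The cleaner bookkeeping: on non-abort runs of $\cA$ (probability $\ge 1-\delta$), the output is a length-$k$ vector agreeing with $f^k(x)$ except on an expected $\le \epsilon/(1-\delta)$-fraction of coordinates; so $\E_j[e_j] \le \epsilon/(1-\delta) \le 2\epsilon$ (using $\delta \le 1/4$), and the $\bot$-output on coordinate $j$ happens only when $\cA$ aborts, so $\E_j[a_j] = a \le \delta$. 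Now run three Markov arguments: at most a $\frac1{10}$-fraction of $j$ have $q_j > 10 c_0$ where $c_0 = o(c)$ is the average block cost (so for those $j$, truncating the single-copy algorithm at $c$ queries and aborting otherwise costs an extra $\le \frac1{10}$ in abort probability — actually one truncates and aborts, contributing to the abort parameter); and one also needs to union with the bad-hardwiring event. Summing the slack: the abort probability of the single-copy algorithm is at most $\delta$ (inherited) $+ \,\frac1{10}$ (query truncation over a random index, Markov'd to a constant) $+$ the hardwiring/error-amplification terms, which after the Markov bounds land at $O(\delta + \epsilon)$, giving the stated $\frac1{10} + 4\delta + 4\epsilon$; similarly the error becomes $O(\epsilon)$ per coordinate after conditioning on the good index, amplified by the $1/(1-\delta)$ and the Markov cuts to at most $48\epsilon/k$ — here the $1/k$ survives because the \emph{expected} per-coordinate error is $\le 2\epsilon/k$ on average over $j$... no: $\E_j[e_j] \le 2\epsilon$, so one cannot in general get $e_j = O(\epsilon/k)$ for a fixed $j$.

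Let me reconsider. The $\frac{48\epsilon}{k}$ in the denominator must come from the following finer argument, which is the real heart: by Markov over $j$, the fraction of indices $j$ with $e_j > \frac{2\epsilon}{1-\delta} \cdot \frac{1}{\gamma}$ is at most $\gamma$; choosing $\gamma$ to be a small \emph{constant} still only gives $e_j = O(\epsilon)$, not $O(\epsilon/k)$. So the $1/k$ cannot appear that way. Instead, I believe the intended route is: one does \emph{not} extract a single algorithm from one $\cA$; rather, one shows that \emph{every} single-copy algorithm of cost $< c$ must abort-or-err too much on $\mu$, and one \emph{amplifies} the $k$-copy algorithm first. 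Specifically, take the purported cheap $\cA$ for $f^k$; run $\Theta(\log k)$... no. The right idea: if $\ddt^\mu_{1/10+4\delta+4\epsilon,\,48\epsilon/k}(f) = c$, then in particular every cost-$(c-1)$ deterministic single-copy algorithm either aborts with probability $> \frac1{10}+4\delta+4\epsilon$ or errs with probability $> 48\epsilon/k$ under $\mu$. Now suppose $\ddt^{\mu^k}_{\delta,\epsilon}(f^k) < \frac{k}{\text{const}} \cdot c$. Use the \emph{information-theoretic / averaging} direct-sum argument of Ben-David--Kothari style: the cheap $k$-copy algorithm, restricted to a random coordinate with the other coordinates drawn (not fixed!) from $\mu^{k-1}$ and answered adaptively by sampling, yields a \emph{randomized} single-copy algorithm of expected cost $< c$; by Yao (in the abort model, Lemma~\ref{lem:yao-abort}) this contradicts the distributional lower bound $c$ \emph{provided} the error and abort of this randomized single-copy algorithm are small enough — and on a random coordinate its \emph{expected} abort is $\le \delta$ and expected error is $\le \epsilon$, but we need per-run bounds, which is precisely what the abort-model minimax (with parameters $\alpha,\beta$) converts, at the cost of the factor blowups, into the stated $\frac1{10}+4\delta+4\epsilon$ and $\frac{48\epsilon}{k}$.

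The step I expect to be the main obstacle is exactly this last one: pushing the \emph{averaged-over-$j$} error bound $\E_j[e_j] \le 2\epsilon$ down to a \emph{per-index} error bound of $O(\epsilon/k)$. This is impossible in general, so the lemma must be using something more — I suspect that $\cA$'s correctness on the \emph{vector} $f^k(x)$ (all $k$ coordinates simultaneously, up to the abort) means that the event ``output coordinate $j$ is wrong'' is not just $\le 2\epsilon$ on average but that, by an averaging/Markov argument over $j$, \emph{most} coordinates $j$ have $e_j \le \frac{48\epsilon}{k}$ \emph{when one additionally discards the $O(1)$-fraction of "bad" indices into the abort budget}. That is: redefine the single-copy algorithm on index $j$ to \emph{abort} whenever $\cA$ would give an answer on coordinate $j$ that is "risky"; the fraction of such risky answers, amortized over the $k$ coordinates, is governed by $\sum_j (\text{stuff}) \le k\epsilon$-type bounds, and shoving all but a $\frac{48\epsilon}{k}$-probability tail of the error into the abort count (which is only allowed to grow by an additive constant $\frac1{10}$ plus $O(\delta+\epsilon)$) is what forces the $1/k$. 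Making this trade precise — quantifying exactly how much error can be converted to abort while keeping the abort budget at $\frac1{10}+4\delta+4\epsilon$, and verifying the constant $48$ — will be the bulk of the work; the query-cost and index-choice Markov arguments are routine by comparison.
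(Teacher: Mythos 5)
Your proposal circles the right approach---embed the single instance in a random coordinate, fix the remaining coordinates by a Markov/averaging argument, and trade error for abort---but it never lands the step that makes the $1/k$ in the error parameter appear, and you yourself flag this as the obstacle. The paper's mechanism for this step is more specific than ``discard risky answers into the abort budget.'' Conditioned on not aborting, $\cA$ gets the \emph{entire vector} right with probability at least $1-\epsilon/(1-\delta)\ge 1-2\epsilon$, and the paper writes this as a telescoping product over coordinates,
\[
1-2\epsilon \;\le\; \prod_{i\le k}\Pr\bigl[\cA(x)_i = f(x^{(i)}) \;\big|\; \cA(x)_{<i}=f^k(x)_{<i},\ \cA(x)\neq\bot\bigr].
\]
An averaging argument over the factors then gives that at least $\tfrac23 k$ indices $i$ have \emph{conditional} error
$\Pr\bigl[\cA(x)_i\neq f(x^{(i)}) \mid \cA(x)_{<i}=f^k(x)_{<i},\ \cA(x)\neq\bot\bigr]\le 12\epsilon/k$
(otherwise the product would drop below $(1-12\epsilon/k)^{k/3}\le e^{-4\epsilon}<1-2\epsilon$). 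This is the source of the $1/k$: it is not the unconditional per-coordinate error $e_j$ (which, as you correctly observe, is only $O(\epsilon)$ in general), but the error \emph{given that the prefix is correct and $\cA$ did not abort}. Your sketch's claim that ``most coordinates $j$ have $e_j\le 48\epsilon/k$ when one discards bad indices into the abort budget'' is false for the unconditional $e_j$, and Markov over $j$ alone cannot produce a $1/k$ factor from an average bound of $O(\epsilon)$.

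The second missing piece is how the derived single-copy algorithm $\cA'$ \emph{realizes} that conditioning. Once the good index $i^*$ and the hardwired outer input $z$ are fixed (by Markov, as you suggest), the prefix values $f^k(z^{(i^*\gets x)})_{<i^*}$ are known constants independent of $x$. So $\cA'$ can, after simulating $\cA$, \emph{check} whether the simulated outputs on coordinates $<i^*$ match those known values and \emph{abort} if not; it also aborts if the simulation exceeds an $O(q/k)$ query budget in block $i^*$. The probability mass spent on these checks is charged to the abort parameter, producing the $\tfrac1{10}+4\delta+4\epsilon$; the error of $\cA'$ on non-aborting runs is then \emph{exactly} the conditional error at coordinate $i^*$, which is $O(\epsilon/k)$. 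Your intuition that error must be converted into abort is on target, but without the telescoping-product decomposition and the explicit ``verify the prefix, else abort'' step, the derived algorithm's error cannot be bounded by $O(\epsilon/k)$, so the proposal as written does not constitute a proof of the lemma.
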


\begin{proof}
Let $\cA$ be a deterministic algorithm with query complexity $q$ that computes $f^k$ with error probability at most $\epsilon$ and abort probability at most $\delta$ when the input $x = (x^{(1)},\ldots,x^{(k)})$ is drawn from $\mu^k$. Then conditioned on $\cA$ not aborting, it outputs the correct value of $f^k$ with probability at least $1-\frac{\epsilon}{1-\delta} \ge 1 - 2\epsilon$ and
\begin{align*}
1-2\epsilon
&\le \Pr_{x \sim \mu^k}\left[ \cA(x) = f^k(x) \;\middle|\; \cA(x) \neq \bot \right] \\
&= \prod_{i \le k} \Pr_{x \sim \mu^k}\left[ \cA(x)_i = f(x^{(i)}) \;\middle|\; \cA(x)_{< i} = f^k(x)_{< i}, \cA(x) \neq \bot \right].
\end{align*}
This implies that at least $\frac23k$ indices $i \in [k]$ satisfy
\begin{equation}
\label{eqn:error-cond}
\Pr_{x \sim \mu^k}\left[ \cA(x)_i \neq f(x^{(i)}) \;\middle|\; \cA(x)_{< i} =  f^k(x)_{< i}, \cA(x) \neq \bot \right] \le \frac{12\epsilon}{k},
\end{equation}
otherwise the product in the product in the previous inequality would be less than $(1-12\epsilon/k)^{k/3} \le e^{-4 \epsilon} < 1-2\epsilon$, contradicting the lower bound on this product.

For each $i \le k$, let $q_i(x)$ denote the number of queries that $\cA$ makes to $x^{(i)}$ on input $x$. The query complexity of $\cA$ guarantees that for each input $x$, $\sum_{i \le k} q_i(x) \le q$. Therefore, $\sum_{i \le k} \E_{x \sim \mu^k} [ q_i(x) ] \le q$ and at least $\frac23k$ indices $i \in [k]$ satisfy
\begin{equation}
\label{eqn:query-cond}
\E_{x \sim \mu^k}\left[ q_i(x) \right] \le \frac{3q}{k}.
\end{equation}
Thus, some index $i^* \in [k]$ satisfies both~\eqref{eqn:error-cond} and~\eqref{eqn:query-cond}. Fix such an index $i^*$. For inputs $y \in \mu^k$ and $x \in \mu$, write $y^{(i^* \gets x)} := (y^{(1)},\ldots,y^{(i^*-1)},x,y^{(i^*+1)},\ldots,y^{(k)})$ to be the input obtained by replacing $y^{(i^*)}$ with $x$ in $y$. With this notation, the two conditions~\eqref{eqn:error-cond} and~\eqref{eqn:query-cond} satisfied by $i^*$ can be rewritten as
\[
\E_{y \sim \mu^k}\left[ \Pr_{x \sim \mu}\left[ \cA(y^{(i^* \gets x)})_{i^*} \neq f(x) \;\middle|\; \cA(y^{(i^* \gets x)})_{< i^*} = f^k(y^{(i^* \gets x)})_{< i^*}, \cA(y^{(i^* \gets x)}) \neq \bot \right]\right] \le \frac{12\epsilon}{k}
\]
and
\[
\E_{y \sim \mu^{k}}\left[ \E_{x \sim \mu}\left[ q_{i^*}(y^{(i^* \gets x)}) \right]\right] \le \frac{3q}{k}.
\]
The correctness of $\cA$ also guarantees that
\[
\E_{y \sim \mu^{k}}\left[ \Pr_{x \sim \mu}\left[ \cA(y^{(i^* \gets x)}) = \bot \right]\right] \le \delta
\]
and
\[
\E_{y \sim \mu^{k}}\left[ \Pr_{x \sim \mu}\left[ \cA(y^{(i^* \gets x)})_{< i^*} \neq f^k(y^{(i^* \gets x)})_{< i^*}) \;\middle|\; \cA(y^{(i^* \gets x)}) \neq \bot \right]\right] \le \epsilon.
\]
Therefore, by Markov's inequality, there exists an input $z \in \{0,1\}^{n \times k}$ such that
\begin{align*}
\Pr_{x \sim \mu}\left[ \cA(z^{(i^* \gets x)}) = \bot \right] &\le 4\delta, \\
\Pr_{x \sim \mu}\left[ \cA(z^{(i^* \gets x)})_{< i^*} \neq f^k(z^{(i^* \gets x)})_{< i^*} \;\middle|\; \cA(z^{(i^* \gets x)}) \neq \bot \right] &\le 4\epsilon, \\
\Pr_{x \sim \mu}\left[ \cA(z^{(i^* \gets x)})_{i^*} \neq f(x) \;\middle|\; \cA(z^{(i^* \gets x)})_{< i^*} = f^k(z^{(i^* \gets x)}), \cA(z^{(i^* \gets x)})\neq \bot \right] &\le \frac{48\epsilon}{k}, \mbox{ and}\\
\E_{x \sim \mu}\left[ q_{i^*}(z^{(i^* \gets x)}) \right] &\le \frac{12q}{k}.
\end{align*}

Let $\cA'$ be the deterministic algorithm that computes $f(x)$ by
simulating $\cA$ on the input $z^{(i^* \gets x)}$ with two
additions:
\begin{enumerate}
\item If $\cA$ attempts to query more than $\frac{120q}{k}$ bits of $x$, $\cA'$ aborts, and
\item When $\cA$ terminates, the algorithm $\cA'$ first verifies that the output generated by $\cA$ satisfies $\cA(z^{(i^* \gets x)})_{\le i^*} = f^k(z^{(i^* \gets x)})$. If so $\cA'$ returns the value $\cA(z^{(i^* \gets x)})_{i^*}$; if not, $\cA'$ aborts.
\end{enumerate}
The algorithm $\cA'$ has query complexity at most $\frac{120q}{k}$ and, by the conditions satisfied by $z$, it aborts with probability at most
$\frac1{10} + 4\delta + 4\epsilon$ and errs
with probability at most $\frac{48\epsilon}{k}$ when $x \sim \mu$.
\end{proof}

\subsection{Proof of Theorem~\ref{thm:directsum}}
\label{sec:directsum-thm}

We now complete the proof of Theorem~\ref{thm:directsum}.
Fix $\delta = \frac1{40}$.
By Proposition~\ref{prop:abort-average} and the second inequality of Lemma~\ref{lem:yao-abort},
\[
\overline{\rdt}_{\frac{96\epsilon}k}(f)
\le 2\,\rdt_{\frac12,\frac{48\epsilon}k}(f) \le
2\,\rdt_{\frac1{5}+4\delta+4\epsilon,\frac{48\epsilon}{k}}(f)
\le 2\max_\mu \ddt^\mu_{\frac1{10}+2\delta+2\epsilon,\frac{24\epsilon}{k}}(f).
\]
Let $\mu^*$ denote a distribution where the maximum is attained. By Lemma~\ref{lem:ds-Dmu2},
\[
\ddt^{\mu^*}_{\frac1{10}+2\delta+2\epsilon,\frac{24\epsilon}{k}}(f)
= O\left( \frac1k \cdot \ddt^{(\mu^*)^k}_{\frac{\delta}2,\frac\epsilon2}(f^k)\right).
\]
Using the first inequality of Lemma~\ref{lem:yao-abort} we then obtain
\[
\ddt^{(\mu^*)^k}_{\frac{\delta}2,\frac\epsilon2}(f^k)
\le \max_{\nu} \ddt^{\nu}_{\frac\delta2,\frac\epsilon2}(f^k)
\le \rdt_{\delta,\epsilon}(f^k).
\]
Combining these inequalities and applying Proposition~\ref{prop:abort-average} once more yields
\[
\overline{\rdt}_{\frac{96\epsilon}k}(f)
\le O\big( \tfrac1k \cdot \rdt_{\delta,\epsilon}(f^k) \big)
\le O\big( \tfrac1k \cdot \overline{\rdt}_\epsilon(f^k) \big).
\]
Theorem~\ref{thm:directsum} follows from the identity
$\overline{\rdt}_{\frac{\epsilon}k}(f) = \Theta\big(\overline{\rdt}_{\frac{96\epsilon}k}(f)\big)$ obtained from the standard success amplification trick.
\qed

\subsection{Proof of Corollaries~\ref{cor:directsum-rdt} and~\ref{cor:directsum-rcc}}
\label{sec:directsum-cors}

Corollary~\ref{cor:directsum-rdt} is obtained as a direct consequence of Theorems~\ref{thm:separation} and~\ref{thm:directsum}.

\begin{proof}[Proof of Corollary~\ref{cor:directsum-rdt}]
The upper bound is via the universal bound~\eqref{eq:ds-ub}. For the matching lower bound,
let $f : \bitsn \to \bit$ be a function that satisfies the condition of Theorem~\ref{thm:separation}. By Theorem~\ref{thm:directsum}, the randomized communication complexity of $f^k$ satisfies
\[
R(f^k) \ge \overline{\rdt}(f^k) = \Omega\big( k \cdot \overline{\rdt}_{\frac1{3k}}(f)\big)
\]
By Theorem~\ref{thm:separation},
\[
\overline{\rdt}_{\frac1{3k}}(f) = \Omega\big(\rdt(f) \cdot \log k \big)
\]
as long as $k \le 2^{\errordep}$.
Combining those inequalities yields $\rdt(f^k) = \Omega\big(k \log k \cdot \rdt(f)\big)$,
as we wanted to show.
\end{proof}

The proof of Corollary~\ref{cor:directsum-rcc} uses the following randomized query-to-communication lifting theorem of \goos, Pitassi, and Watson~\cite{GoosPW17}.

\begin{theorem}[\goos, Pitassi, Watson]
\label{thm:gpw-lift}
Define $\textsc{Ind}_m : [m] \times \{0,1\}^m \to \{0,1\}$ to be the index function mapping $(x,y)$ to $y_x$ and fix $m = n^{256}$.
For every $f : \bitsn \to \bit$,
\[
\rcc(f \circ \textsc{Ind}_m) = \rdt(f) \cdot \Theta( \log n)
\]
and
\[
\rcc(f^k \circ \textsc{Ind}_m) = \rdt(f^k) \cdot \Theta( \log n).
\]
\end{theorem}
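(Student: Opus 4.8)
The two equalities each split into a matching upper and lower bound, and only the lower bound has content. The upper bounds $\rcc(f\circ\textsc{Ind}_m)=O(\rdt(f)\log n)$ and $\rcc(f^k\circ\textsc{Ind}_m)=O(\rdt(f^k)\log n)$ come from the routine simulation direction: Alice and Bob share the coins of an optimal randomized decision tree and answer each of its queries to an input bit by exchanging $\lceil\log m\rceil+1=O(\log n)$ bits (Alice sends the relevant pointer, Bob returns the indexed bit), turning a $q$-query tree into an $O(q\log n)$-bit protocol with the same error. The work is in the lower bound $\rcc(F\circ\textsc{Ind}_m)=\Omega(\rdt(F)\log n)$ for $F\in\{f,f^k\}$, which I would prove as a \emph{simulation theorem}: from a randomized public-coin protocol $\Pi$ of cost $C$ computing $F\circ\textsc{Ind}_m$ with error $\tfrac13$, construct a randomized decision tree of cost $O(C/\log n)$ computing $F$ with error $\tfrac13+o(1)$, then amplify.

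The first step is to reduce the lower bound to a distributional core. By Yao's minimax principle for randomized query complexity it suffices, for every distribution $\nu$ on $F$'s domain, to exhibit a deterministic decision tree of cost $O(C/\log n)$ computing $F$ with error $\tfrac13+o(1)$ under $\nu$. Lift $\nu$ to a distribution on gadget inputs; fixing the public coins $r$ of $\Pi$ so as to preserve correctness in expectation gives a \emph{deterministic} cost-$C$ protocol $\Pi_r$ correct on a $\tfrac23$-fraction of the lifted inputs. It then remains to show: from $\Pi_r$ and an input $(x,y)$ on which $\Pi_r$ is correct, one can—knowing only $z=\textsc{Ind}_m(x,y)$—recover $\Pi_r(x,y)$ with $O(C/\log n)$ queries to $z$, for all but an $o(1)$ fraction of the $(x,y)$ lying above each $z$. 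Since $m$ is a large polynomial in the number of blocks, this $o(1)$ slack is absorbed into the error.

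The core is the min-entropy/rectangle simulation of Raz--McKenzie and GPW. Walk the protocol tree from the root maintaining a combinatorial rectangle $A\times B$ on the gadget inputs and a set $I$ of \emph{revealed} blocks carrying values $z_I$ already queried by the decision tree, under the invariants: (i) $A$ restricted to the non-revealed blocks is \emph{dense}—every sub-block keeps min-entropy rate close to $\log m$; (ii) every point of $A\times B$ maps under the gadget to an input agreeing with $z$ on $I$; (iii) $A\times B$ is consistent with the transcript so far. At an internal node the owner's message bit splits $A$ (or $B$), and to pick a branch I would invoke the \emph{density-restoration lemma}: there is a small block set $J$ and a partial assignment to the $J$-blocks of $A$ whose imposition restores density off $I\cup J$. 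For each $j\in J$ the decision tree queries $z_j$, and $A,B$ are then restricted so that the gadget on block $j$ outputs $z_j$—legitimate precisely because a pointer distribution of above-threshold min-entropy makes the indexed bit an essentially fresh uniform coin, so the restriction is nonempty and shrinks $B$ only mildly. An amortized potential argument—each communicated bit increases the deficiency of $A$ by $1$, each revealed block costs $\approx\log m$ of deficiency—gives $|I|=O(C/\log m)=O(C/\log n)$ at the leaf. At the leaf, $\Pi_r$ outputs a fixed value $b$ on all of $A\times B$; since $A$ is still dense off $I$, the gadget outputs on the non-revealed blocks still range over every completion of $z_I$, so correctness of $\Pi_r$ on the whole leaf rectangle forces $F$ to equal $b$ on every such completion, hence $F(z)=b$. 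The decision tree, having queried exactly $z_I$ and followed $\Pi_r$, outputs $F(z)$; its only errors come from the $o(1)$-probability events where density restoration fails or the underlying $(x,y)$ is atypical, plus the protocol's own error. The $f^k$ statement is the identical simulation theorem applied to $F=f^k$—the rectangle and density arguments never use that the output is a single bit, so they go through verbatim for multi-output $F$—with $\textsc{Ind}_m$ on each of its input blocks.

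The main obstacle is the density-restoration lemma and its \emph{quantitatively tight} accounting for the index gadget: one needs the principle that ``large rectangles look like fresh coins on almost every block'' in a form strong enough that (a) the block set $J$ added at each of the at most $C$ steps is small enough to keep $|I|=O(C/\log n)$, and (b) the cumulative probability of the failure events (density restoration breaking down; an atypical completion) stays $o(1)$. Establishing (b) is the delicate part in the randomized setting, since correctness is only maintained in expectation over the hard distribution rather than pointwise; closing this loop is exactly what forces $m$ to be a large fixed polynomial in the number of blocks—the generous exponent is the slack paying for it—and is the technically hardest ingredient of the GPW argument.
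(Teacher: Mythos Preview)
Your sketch is a reasonable outline of the G\"o\"os--Pitassi--Watson simulation argument, but the paper does not prove this theorem at all: it is stated as an external result and cited to~\cite{GoosPW17}. The only content the paper adds is the Remark immediately following the statement, which observes that the second identity (for $f^k$) holds because the GPW theorem in~\cite{GoosPW17} is stated for functions with arbitrary finite range (indeed for partial functions and relations), so it applies to $f^k$ verbatim. In other words, the paper's ``proof'' is a one-line reduction of the $f^k$ case to the black-box GPW theorem, not a re-derivation of the density-restoration machinery.

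So there is no gap in your reasoning, but you are doing far more than the paper does. If your goal is to match the paper, the appropriate answer is simply: invoke~\cite{GoosPW17} for the first identity, and note (as in the Remark) that since their lifting theorem holds for any finite-range $f$, it applies in particular to $f^k$, giving the second identity.
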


\begin{remark}
The statement of Theorem~\ref{thm:gpw-lift} in~\cite{GoosPW17} only mentions the first identity explicitly. However, as discussed in their Section II, the theorem statement holds for functions with any finite range.\footnote{In fact, their theorem also holds in even more general settings such as when $f$ is a partial function or a relation, for example.} Therefore, the theorem holds for the function $f^k$ as well as $f$.
\end{remark}

\begin{proof}[Proof of Corollary~\ref{cor:directsum-rcc}]
By Corollary~\ref{cor:directsum-rdt}, there exists a function $f : \bitsn \to \bit$ which satisfies $\rdt(f^k) = \Theta(k \log k \cdot \rdt(f))$. Combining this result with Theorem~\ref{thm:gpw-lift}, we obtain
\begin{align*}
\rcc\big( (f \circ \textsc{Ind}_m)^k \big)
&= \rcc\big( f^k \circ \textsc{Ind}_m \big) \\
&= \rdt(f^k) \cdot \Theta(\log n) \\
&= \Theta( k \log k \cdot \rdt(f) \cdot \log n) \\
&= \Theta( k \log k ) \cdot \rcc(f\circ \textsc{Ind}_m). \qedhere
\end{align*}
\end{proof}

\section{Conclusions and Open Problems}

In this work, we prove a strong direct sum theorem for average-case
randomized query complexity---to compute $f^k$ with probability
$\varepsilon$ in the average-case query complexity model, one must
spend $k$ times the resources of computing $f$ with error
$\varepsilon/k$.  We then give the first total function $f$ whose
query complexity scales with $\log(1/\varepsilon)$, matching the
blowup one gets from standard error reduction.
We believe a strong direct sum-type theorem should also hold for many
\emph{composed functions}.  A natural first step would be to prove an \emph{XOR Lemma} in the strong direct sum setting.

\begin{conjecture}{(Strong Direct Sum for $\xor$.)}\label{conj:xor}
For all functions $f$, all positive integers $k \geq 2$, and all $0 < \varepsilon < 1/3$,
$$\overline{\rdt}_\eps(\xor_k\circ f) = \Theta(k\cdot \overline{\rdt}_{\varepsilon/k}(f))\ .$$
\end{conjecture}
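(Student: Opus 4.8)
The plan is to prove matching bounds, paralleling our treatment of $f^k$. The upper bound $\overline{\rdt}_\eps(\xor_k\circ f) = O\big(k\cdot\overline{\rdt}_{\eps/k}(f)\big)$ is the union bound: run an optimal $\tfrac\eps k$-error algorithm on each of the $k$ instances and output the parity of the answers; all $k$ answers are correct with probability at least $1-\eps$, and the expected cost is $k\cdot\overline{\rdt}_{\eps/k}(f)$. All of the work is in the lower bound, which I would obtain by establishing a distributional analogue in the style of Lemma~\ref{lem:ds-Dmu2} and then feeding it through the abort-model and minimax machinery of Sections~\ref{sec:abort} and~\ref{sec:directsum-thm} essentially verbatim. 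The distributional target is: for every product distribution $\mu^k$ and all small enough constants $\delta,\eps$, $\ddt^{\mu^k}_{\delta,\eps}(\xor_k\circ f) = \Omega\big(k\cdot\ddt^\mu_{1/4,\,c\eps/k}(f)\big)$ for a suitable absolute constant $c$.

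The new ingredient replaces the telescoping identity $\Pr[\text{all }k\text{ outputs correct}\mid\text{no abort}] = \prod_i\Pr[\text{output }i\text{ correct}\mid\text{outputs }<i\text{ correct, no abort}]$ that drives Lemma~\ref{lem:ds-Dmu2}, which is unavailable since a $\xor_k\circ f$ algorithm emits a single bit. Let $\cA$ be a deterministic $q$-query algorithm for $\xor_k\circ f$ with error $\le\eps$ and abort probability $\le\delta$ on $\mu^k$. At any leaf $\ell$ of $\cA$ the posterior on the input is a product distribution, so the bits $f(x^{(1)}),\dots,f(x^{(k)})$ are mutually independent given $\ell$; writing $p_i(\ell) = \Pr[f(x^{(i)})=1\mid\ell]$ and $\mu_i(\ell) = \min\{p_i(\ell),1-p_i(\ell)\}$, the parity $\bigoplus_i f(x^{(i)})$ takes either value with conditional probability at most $\tfrac12\big(1+\prod_i(1-2\mu_i(\ell))\big)$. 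Hence at every non-abort leaf the fixed output bit of $\cA$ is wrong with conditional probability at least $\eta(\ell) := \tfrac12\big(1-\prod_i(1-2\mu_i(\ell))\big)$, so $\E_\ell\big[\eta(\ell)\cdot\mathbf{1}[\ell\text{ non-abort}]\big]\le\eps$. Since $-\ln(1-2t)\ge 2t$, at every non-abort leaf with $\eta(\ell)\le\tfrac14$ this forces $\sum_i\mu_i(\ell)\le O(\eta(\ell))$: to pin down the parity, $\cA$ must have driven the \emph{total} residual uncertainty about the $k$ instances down to $O(\eta(\ell))$, a quantity that is $O(\eps)$ in expectation over non-abort leaves.

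The reduction to $f$ is then the natural one, with one twist. Given the real input $x\sim\mu$, pick $i\in[k]$ uniformly, place $x$ in coordinate $i$ and fresh samples from $\mu$ in the others, and simulate $\cA$, answering its queries to coordinate $i$ by querying $x$. If $\cA$ aborts, abort; otherwise let $\ell$ be the leaf reached (the reduction can compute $p_i(\ell)$ from the bits of $x$ it has seen); if $\mu_i(\ell) > \theta := c\eps/k$, abort; otherwise output the MAP guess $\mathbf{1}[p_i(\ell)\ge\tfrac12]$. By construction the error is at most $\theta = O(\eps/k)$, and the expected number of queries to $x$ is $\tfrac1k\E\big[\sum_i(\#\text{queries }\cA\text{ makes to coordinate }i)\big]\le q/k$. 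The abort probability is at most $\delta$ (from $\cA$ aborting) plus $\tfrac1k\E_\ell\big[\#\{i:\mu_i(\ell)>\theta\}\cdot\mathbf{1}[\ell\text{ non-abort}]\big]$; bounding $\#\{i:\mu_i(\ell)>\theta\}\le\tfrac1\theta\sum_i\mu_i(\ell)$ and using the leaf bound above gives $O\big(\tfrac{\eps}{k\theta}\big)+O(\eps) = O(1/c)+O(\eps)$ from the leaves with $\eta(\ell)\le\tfrac14$, plus $O(\eps)$ from the at most $O(\eps)$-mass of leaves with $\eta(\ell)>\tfrac14$. Taking $c$ a large enough constant keeps this below $\tfrac14$, and fixing the internal randomness (Markov plus a union bound over error, cost and abort, as in the proof of Theorem~\ref{thm:directsum}) yields the claimed deterministic $\tfrac{c\eps}{k}$-error, constant-abort, $O(q/k)$-query algorithm for $f$ on $\mu$. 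Plugging this into the Yao-for-abort and amplification steps of Section~\ref{sec:directsum-thm}---rescaling $\eps$ by a constant at the minimax step, which only perturbs $\overline{\rdt}_\eps(\xor_k\circ f)$ by a constant factor via the standard amplification trick---gives $\overline{\rdt}_\eps(\xor_k\circ f) = \Omega\big(k\cdot\overline{\rdt}_{\eps/k}(f)\big)$.

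I expect the main obstacle to be two-fold: first, correctly formulating and proving the ``to know the parity you must know every instance'' lemma---i.e.\ $\sum_i\mu_i(\ell) = O(\eta(\ell))$ on the relevant leaves---and, second, the constant management needed to extract error $O(\eps/k)$ rather than $O(1/k)$. The subtle point, easy to overlook, is that the reduction must abort whenever $\cA$ aborts: this restricts attention to $\cA$'s non-abort leaves, where $\E_\ell[\eta(\ell)]\le\eps$ carries the genuinely small parameter $\eps$ on the right-hand side rather than $\eps+\Theta(\delta)$ with a constant $\delta$, and it is precisely this that lets the per-instance error drop by the full factor of $k$. (By contrast, the naive reduction---hardwire all but one coordinate, run $\cA$, and XOR its output with the known values---is correct with probability only $1-\eps$, and so yields just the weak direct sum $\overline{\rdt}_\eps(\xor_k\circ f)=\Omega(k\cdot\overline{\rdt}_\eps(f))$.) Everything else is bookkeeping already carried out in Sections~\ref{sec:abort} and~\ref{sec:directsum-thm}.
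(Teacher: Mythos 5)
You are attempting to prove a statement the paper does not prove: this is Conjecture~\ref{conj:xor}, which the authors explicitly leave open. There is therefore no ``paper proof'' to compare against, and a complete version of your proposal would resolve an open problem rather than reprove a known result.

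That said, the plan looks sound to me, and the core new idea is a genuine advance over Lemma~\ref{lem:ds-Dmu2}. The telescoping identity used there really is unavailable for $\xor_k\circ f$, and your replacement---that at any non-abort leaf $\ell$ of $\cA$ the posterior under $\mu^k$ factors across blocks, so $\cA$'s single output bit has conditional error exactly $\eta(\ell)=\tfrac12\big(1-\prod_i(1-2\mu_i(\ell))\big)$, and that $\eta(\ell)\le\tfrac14$ forces $\sum_i\mu_i(\ell)\le 2\eta(\ell)$---is correct and is the right lemma. One small imprecision: $-\ln(1-2t)\ge 2t$ only gives $\sum_i\mu_i(\ell)\le -\tfrac12\ln\big(1-2\eta(\ell)\big)$; you also need the other direction $-\ln(1-u)\le 2u$ for $u\le\tfrac12$ to conclude $\sum_i\mu_i(\ell)\le 2\eta(\ell)$, but both inequalities are elementary. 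The observation that the reduction can \emph{ignore $\cA$'s output entirely} and instead output the MAP guess computed from the leaf, aborting when $\mu_i(\ell)>\theta$, is what makes the per-instance error drop by the full factor of $k$; your parenthetical explaining why the naive ``hardwire-and-XOR'' reduction only gives weak direct sum is exactly right and worth keeping.

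Two places to be careful in a full write-up. First, $\ddt^\mu_{\delta,\eps}$ is a worst-case query bound, so the expected-cost bound $\E_x[q_{i^*}]\le O(q/k)$ must be converted by truncating at $O(q/k)$ queries and charging the overflow to the abort probability, exactly as in the proof of Lemma~\ref{lem:ds-Dmu2}; you cite this but it needs to appear explicitly. Second, the Markov-and-union-bound fixing of $(i^*,y^*)$ has to cover the abort probability (three contributions: $\cA$ aborting, $\mu_{i^*}(\ell)>\theta$, and truncation) and the expected query cost, but \emph{not} the error, since the per-leaf bound $\mu_{i^*}(\ell)\le\theta$ survives any fixing of $(i^*,y^*)$ because the coordinate-$i^*$ marginal of the posterior at $\ell$ depends only on the coordinate-$i^*$ queries and answers at $\ell$. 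You should verify the constants so that after Markov the total abort probability stays below $\tfrac14$ for small enough $\delta,\eps$ and a large enough absolute constant $c$ in $\theta=c\eps/k$. Neither of these is a conceptual obstacle, and the downstream Yao/amplification steps go through verbatim as in Section~\ref{sec:directsum-thm}.
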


We believe a similar result should hold for the majority function $\maj_k$.
\begin{conjecture}{(Strong Direct Sum for $\maj$.)}\label{conj:maj}
  For all functions $f$, all positive integers $k \geq 2$, and all $0 < \varepsilon < 1/3$,
  $$\overline{\rdt}_\eps(\maj_k\circ f) = \Theta(k \cdot \overline{\rdt}_{\varepsilon/k}(f))\ .$$
\end{conjecture}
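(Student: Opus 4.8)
The upper bound is the usual union bound: run an error-$\eps/k$ algorithm for $f$ on each of the $k$ blocks and output the majority of the answers, which is wrong only if some block is wrong (probability $\le k\cdot\frac\eps k=\eps$), at expected cost $\le k\cdot\overline{\rdt}_{\eps/k}(f)$ on every input. The content is the matching lower bound, and I would prove it inside the abort-algorithm framework of Section~\ref{sec:ds}. As in the proof of Theorem~\ref{thm:directsum}, it suffices, via Proposition~\ref{prop:abort-average} and Lemma~\ref{lem:yao-abort}, to produce one hard input distribution $\nu$ on $(\bitsn)^k$ and show that every deterministic decision tree computing $\maj_k\circ f$ on $\nu$ with small constant error $\eps$ and abort probability $\delta$ uses $\Omega\big(k\cdot\ddt^\mu_{1/3,\,c\eps/k}(f)\big)$ queries, where $\mu$ is chosen (nearly) hardest for $f$ at error level $\Theta(\eps/k)$; amplification and Lemma~\ref{lem:yao-abort} then turn this into $\overline{\rdt}_\eps(\maj_k\circ f)=\Omega(k\cdot\overline{\rdt}_{\eps/k}(f))$. (The regime $\eps\in(c,\tfrac13)$ reduces to small $\eps$ by a standard $O(1)$-fold amplification on both sides.)

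The distribution I would use is the \emph{balanced slice} (take $k$ odd; pad one trivial block otherwise): sample $S\subseteq[k]$ uniformly among sets of size $\frac{k-1}2$ or $\frac{k+1}2$, each size with probability $\frac12$, and set block $i$ to an independent draw from $\mu_1$ if $i\in S$ and from $\mu_0$ if $i\notin S$, where $\mu_b$ is the conditional on $f=b$ of a hardest distribution for $f$. Three features make this the right choice. (i) The per-block marginal is $\mu:=\frac12\mu_0+\frac12\mu_1$, and a short splitting-plus-amplification argument shows $\ddt^\mu_{\Theta(\eps/k)}(f)=\Omega(\overline{\rdt}_{\eps/k}(f))$, so extracting a single-block algorithm against $\mu$ loses only constants. (ii) Every block is pivotal---flipping the $f$-value of one block flips $\maj_k\circ f$---so on $\mathrm{supp}(\nu)$ the composed function is as sensitive to each block as $f^k$ is, and one can try to charge queries to individual blocks as in Claim~\ref{claim:specialcell}. (iii) On $\mathrm{supp}(\nu)$ the number $V$ of $1$-valued blocks is $\frac{k-1}2$ or $\frac{k+1}2$, which have opposite parities, so $\maj_k\circ f$ equals $\xor_k\circ f$ there up to a fixed global flip; hence on this distribution $\maj$ and $\xor$ are the same problem, and I would first settle Conjecture~\ref{conj:xor}.

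For $\xor$ the plan is clean and I expect it to work. Take $\nu=\mu^k$ with $\mu$ hardest at error $\Theta(\eps/k)$, and fix a deterministic $q$-query tree $\cA$ with error $\le\eps$ and abort $\le\delta$. Because $\mu^k$ is a product and the bits read along any root-to-leaf path split across blocks, the values $f(x^{(1)}),\dots,f(x^{(k)})$ are \emph{independent conditioned on reaching any leaf} $\ell$; writing $\gamma_{\ell,i}:=\big|1-2\Pr[f(x^{(i)})=1\mid\ell]\big|$ for the conditional confidence in block $i$, a non-aborting leaf errs with probability exactly $\tfrac12\big(1-\prod_i\gamma_{\ell,i}\big)$. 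Thus $\E_\ell\big[\prod_i\gamma_{\ell,i}\big]\ge 1-O(\eps)$, and since $\prod_i\gamma_{\ell,i}\ge 1-\sum_i(1-\gamma_{\ell,i})$ this forces $\sum_i\E_\ell[1-\gamma_{\ell,i}]=O(\eps)$; together with $\sum_i\E_\nu[q_i]\le q$, some block $i^*$ satisfies $\E_\ell[1-\gamma_{\ell,i^*}]=O(\eps/k)$ and $\E_\nu[q_{i^*}]=O(q/k)$. Plug the genuine input $x\sim\mu$ into block $i^*$, fill the rest with a fixed context $z$ chosen by Markov so that block-$i^*$ conditional error and query count stay within constants of their $\mu^k$-averages, and have the extracted algorithm output the \emph{leaf-conditional MAP estimate} of $f(x^{(i^*)})$; it is this estimate, rather than the transformed $\xor$ bit, that keeps the error at $O(\eps/k)$. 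Truncating after $O(q/k)$ queries (aborting otherwise) yields a deterministic $O(q/k)$-query algorithm computing $f$ on $\mu$ with error $O(\eps/k)$ and bounded abort probability, as required.

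Carrying this to $\maj$ is the hard part. The $\xor$ argument rests on \emph{leaf-conditional independence of the blocks}, i.e.\ on the distribution being a product; but on a product distribution the majority vote concentrates, so the ``undecided'' inputs that force high per-block accuracy carry mass only $\Theta(1/\sqrt k)$ and are invisible to a constant error budget---exactly why one is forced onto the non-product balanced slice $\nu$, where conditioning on $V\in\{\frac{k-1}2,\frac{k+1}2\}$ correlates the blocks. Under $\nu$ the identity ``leaf error $=\tfrac12(1-\prod_i\gamma_{\ell,i})$'' breaks: the probability that $V=\frac{k+1}2$ rather than $\frac{k-1}2$ given a leaf $\ell$ is the ratio of two adjacent point masses of the Poisson-binomial law of the still-undetermined blocks, and this ratio can be bounded away from $\frac12$ even when several blocks remain uncertain---the ``error cancellation'' of majority (two wrong blocks leave $\maj$ unchanged)---so correctness at $\ell$ no longer forces $\sum_i(1-\gamma_{\ell,i})$ small and a naive averaging over $(i^*,z)$ gives only per-block error $O(\eps)$. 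The crux is to recover a product-type bound for this conditioned Poisson-binomial event: to show that, because $\Pr_\nu[V\in\{\frac{k-1}2,\frac{k+1}2\}]$ is bounded away from $0$, confidence at a leaf still essentially requires \emph{all} blocks to be nearly determined---so that $\cA$ is in effect computing $f^k$ on $\nu$, to which a version of the proof of Lemma~\ref{lem:ds-Dmu2} adapted to the correlated distribution $\nu$ would apply---and to make ``essentially'' quantitative enough to survive a constant error budget. This is the step where I expect a genuinely new idea to be needed; a successful resolution of it, together with the reduction in (iii), would prove Conjecture~\ref{conj:maj} (and the $\xor$ part Conjecture~\ref{conj:xor}).
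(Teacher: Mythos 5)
This statement is labeled as a \emph{conjecture} in the paper (Conjecture~\ref{conj:maj}); the paper offers no proof, so there is nothing on the paper's side to compare your attempt against. Your sketch correctly identifies the right framework (the abort model, Lemma~\ref{lem:yao-abort}, distributional hardness) and makes a nice observation in item~(iii) that $\maj_k\circ f$ and $\xor_k\circ f$ agree up to a global flip on the balanced slice. You also flag, honestly and accurately, that the step from ``XOR on a product distribution'' to ``MAJ on the non-product balanced slice'' is where a new idea is needed. None of that is a criticism---it is an appropriate assessment of an open problem.

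There is, however, an \emph{unacknowledged} gap earlier in the sketch, inside the $\xor$ argument that you describe as ``clean'' and ``expected to work.'' You write that $\E_\ell[\prod_i\gamma_{\ell,i}]\ge 1-O(\eps)$ together with $\prod_i\gamma_{\ell,i}\ge 1-\sum_i(1-\gamma_{\ell,i})$ ``forces'' $\sum_i\E_\ell[1-\gamma_{\ell,i}]=O(\eps)$. That inference is backwards: the union-bound inequality gives an \emph{upper} bound $1-\prod_i\gamma_{\ell,i}\le\sum_i(1-\gamma_{\ell,i})$, so smallness of $\E_\ell[1-\prod_i\gamma_{\ell,i}]$ says nothing useful about $\sum_i\E_\ell[1-\gamma_{\ell,i}]$. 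The correct direction ($1-\prod_i\gamma_{\ell,i}\ge\max_i(1-\gamma_{\ell,i})$, hence $\ge\frac1k\sum_i(1-\gamma_{\ell,i})$) only yields $\sum_i\E_\ell[1-\gamma_{\ell,i}]=O(k\eps)$, and indeed one already has $\E_\ell[1-\gamma_{\ell,i}]\le\E_\ell[1-\prod_j\gamma_{\ell,j}]=O(\eps)$ for \emph{every} $i$, with no slack to push any coordinate down to $O(\eps/k)$. Concretely, nothing in the hypotheses rules out an algorithm that reaches, with probability $\Theta(\eps)$, a leaf at which it is maximally uncertain about every block; such leaves contribute $\Theta(\eps)$ to each $\E_\ell[1-\gamma_{\ell,i}]$ simultaneously. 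After the Markov averaging over $(i^*,z)$ the extracted algorithm then has per-block error $O(\eps)$, not $O(\eps/k)$, which only re-derives a Ben-David--Kothari-type linear direct sum, not the strong one. So the factor $k$ you most need is lost precisely at this step, and the $\xor$ case is not actually clean: some genuinely new argument---perhaps an abort threshold on $\gamma_{\ell,i^*}$, but chosen so that the conditional error and the abort probability trade off at the right scale---is needed already there, before the harder correlated-distribution step for $\maj$. You should also note that item~(iii) does not let you ``first settle Conjecture~\ref{conj:xor}'' and then deduce Conjecture~\ref{conj:maj}: your $\xor$ sketch is tied to the product distribution $\mu^k$, while $\maj=\xor$ holds only on the non-product balanced slice $\nu$, so a proof of Conjecture~\ref{conj:xor} via $\mu^k$ transfers nothing.
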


More generally, let us say that a total function $g:\bit^k \rightarrow\bit$ \emph{admits a strong direct sum theorem} if for all functions $f:\bitsn \rightarrow \bit$, for all positive integers $k\geq 2$, and for all $0 < \eps < 1/3$, we have $\overline{\rdt}_\eps(g\circ f) = \Theta(k\cdot \overline{\rdt}_{\eps/k}(f))$.
Using this terminology, Conjectures~\ref{conj:xor} and~\ref{conj:maj} posits that $\xor_k$ and $\maj_k$ admit strong direct sum theorems.
These two conjectures are special cases of the following general problem.

\begin{oproblem}
  Which functions $g:\bit^k \rightarrow\bit$ admit strong direct sum theorems?
\end{oproblem}

In a different direction, we note that
while we identified one total function whose randomized query complexity scales logarithmically with the inverse of the error parameter $\varepsilon$, we don't have a good understanding of which functions scale this way.

\begin{oproblem}
Characterize the functions $f:\bitsn \rightarrow \bit$ whose average-case randomized query complexity satisfies $\overline{R}_{\eps}(f) = \Omega(R_{1/3}(f)\cdot \log(1/\eps))$ for all $\eps > 2^{-n^{\Omega(1)}}$.
\end{oproblem}

\section*{Acknowledgments}

The first author thanks Alexander Belov and Shalev Ben-David for enlightening discussions and helpful suggestions.
The second author~thanks Peter Winkler for insightful discussions.
Both authors wish to thank the anonymous referees for valuable feedback and for the reference to~\cite{Sherstov18}.

\bibliographystyle{plain}
\bibliography{RandQueryComplexity}

\begin{thebibliography}{10}

\bibitem{AaronsonBK16}
Scott Aaronson, Shalev Ben-David, and Robin Kothari.
\newblock Separations in query complexity using cheat sheets.
\newblock In {\em Proceedings 48th Annual ACM Symposium on Theory of
  Computing}, pages 863--876, 2016.

\bibitem{AmbainisBBLSS17}
Andris Ambainis, Kaspars Balodis, Aleksandrs Belovs, Troy Lee, Miklos Santha,
  and Juris Smotrovs.
\newblock Separations in query complexity based on pointer functions.
\newblock {\em Journal of the ACM}, 64(5):32, 2017.

\bibitem{AmbainisKK16}
Andris Ambainis, Martins Kokainis, and Robin Kothari.
\newblock Nearly optimal separations between communication (or query)
  complexity and partitions.
\newblock In {\em Proceedings 31st Annual Conference on Computational
  Complexity}, page~4, 2016.

\bibitem{AnshuBBGJKLS16}
Anurag Anshu, Aleksandrs Belovs, Shalev Ben-David, Mika G{\"o}{\"o}s, Rahul
  Jain, Robin Kothari, Troy Lee, and Miklos Santha.
\newblock Separations in communication complexity using cheat sheets and
  information complexity.
\newblock In {\em Proceedings 57th Annual IEEE Symposium on Foundations of
  Computer Science}, pages 555--564, 2016.

\bibitem{BarYossefJKS04}
Ziv Bar-Yossef, Thathachar~S Jayram, Ravi Kumar, and D~Sivakumar.
\newblock An information statistics approach to data stream and communication
  complexity.
\newblock {\em Journal of Computer and System Sciences}, 68(4):702--732, 2004.

\bibitem{BarakBCR13}
Boaz Barak, Mark Braverman, Xi~Chen, and Anup Rao.
\newblock How to compress interactive communication.
\newblock {\em SIAM Journal on Computing}, 42(3):1327--1363, 2013.

\bibitem{BenAsherN95}
Yosi Ben{-}Asher and Ilan Newman.
\newblock Decision trees with {AND}, {OR} queries.
\newblock In {\em Proceedings 10th Annual Structure in Complexity Theory
  Conference}, pages 74--81, 1995.

\bibitem{BenDavidK18}
Shalev Ben{-}David and Robin Kothari.
\newblock Randomized query complexity of sabotaged and composed functions.
\newblock {\em Theory of Computing}, 14(1):1--27, 2018.

\bibitem{BravermanR14}
Mark Braverman and Anup Rao.
\newblock Information equals amortized communication.
\newblock {\em IEEE Transactions on Information Theory}, 60(10):6058--6069,
  2014.

\bibitem{BuhrmanNRW07}
Harry Buhrman, Ilan Newman, Hein R{\"{o}}hrig, and Ronald de~Wolf.
\newblock Robust polynomials and quantum algorithms.
\newblock {\em Theory Comput. Syst.}, 40(4):379--395, 2007.

\bibitem{ChakrabartiSWY01}
Amit Chakrabarti, Yaoyun Shi, Anthony Wirth, and Andrew~Chi{-}Chih Yao.
\newblock Informational complexity and the direct sum problem for simultaneous
  message complexity.
\newblock In {\em Proceedings 42nd Annual IEEE Symposium on Foundations of
  Computer Science}, pages 270--278, 2001.

\bibitem{ChorGHFRS85}
Benny Chor, Oded Goldreich, Johan H{\aa}stad, Joel Friedman, Steven Rudich, and
  Roman Smolensky.
\newblock The bit extraction problem or t-resilient functions.
\newblock In {\em Proceedings 26th Annual IEEE Symposium on Foundations of
  Computer Science}, pages 396--407, 1985.

\bibitem{Drucker12}
Andrew Drucker.
\newblock Improved direct product theorems for randomized query complexity.
\newblock {\em Computational Complexity}, 21(2):197--244, 2012.

\bibitem{FederKNN95}
Tom{\'{a}}s Feder, Eyal Kushilevitz, Moni Naor, and Noam Nisan.
\newblock Amortized communication complexity.
\newblock {\em SIAM Journal on Computing}, 24(4):736--750, 1995.

\bibitem{GanorKR14}
Anat Ganor, Gillat Kol, and Ran Raz.
\newblock Exponential separation of information and communication.
\newblock In {\em Proceedings 55th Annual IEEE Symposium on Foundations of
  Computer Science}, pages 176--185, 2014.

\bibitem{GoosPW15}
Mika G{\"o}{\"o}s, Toniann Pitassi, and Thomas Watson.
\newblock Deterministic communication vs. partition number.
\newblock In {\em Proceedings 56th Annual IEEE Symposium on Foundations of
  Computer Science}, pages 1077--1088, 2015.

\bibitem{GoosPW17}
Mika G{\"o}{\"o}s, Toniann Pitassi, and Thomas Watson.
\newblock Query-to-communication lifting for {BPP}.
\newblock In {\em Proceedings 58th Annual IEEE Symposium on Foundations of
  Computer Science}, 2017.

\bibitem{ImpagliazzoRW94}
Russell Impagliazzo, Ran Raz, and Avi Wigderson.
\newblock A direct product theorem.
\newblock In {\em Proceedings 9th Annual Structure in Complexity Theory
  Conference}, pages 88--96, 1994.

\bibitem{JainKS10}
Rahul Jain, Hartmut Klauck, and Miklos Santha.
\newblock Optimal direct sum results for deterministic and randomized decision
  tree complexity.
\newblock {\em Inf. Process. Lett.}, 110(20):893--897, 2010.

\bibitem{KarchmerRW95}
Mauricio Karchmer, Ran Raz, and Avi Wigderson.
\newblock Super-logarithmic depth lower bounds via the direct sum in
  communication complexity.
\newblock {\em Computational Complexity}, 5(3):191--204, 1995.

\bibitem{MolinaroWY13}
Marco Molinaro, David~P. Woodruff, and Grigory Yaroslavtsev.
\newblock Beating the direct sum theorem in communication complexity with
  implications for sketching.
\newblock In {\em Proceedings 24th Annual ACM-SIAM Symposium on Discrete
  Algorithms}, pages 1738--1756, 2013.

\bibitem{MolinaroWY15}
Marco Molinaro, David~P Woodruff, and Grigory Yaroslavtsev.
\newblock Amplification of one-way information complexity via codes and noise
  sensitivity.
\newblock In {\em Proceedings 42nd Annual International Colloquium on Automata,
  Languages, and Programming}, pages 960--972. Springer, 2015.

\bibitem{MukhopadhyayS15}
Sagnik Mukhopadhyay and Swagato Sanyal.
\newblock Towards better separation between deterministic and randomized query
  complexity.
\newblock In {\em Proceedings 35th Annual Foundations of Software Technology
  and Theoretical Computer Science}, pages 206--220, 2015.

\bibitem{NisanRS99}
Noam Nisan, Steven Rudich, and Michael~E. Saks.
\newblock Products and help bits in decision trees.
\newblock {\em SIAM Journal on Computing}, 28(3):1035--1050, 1999.

\bibitem{Shaltiel03}
Ronen Shaltiel.
\newblock Towards proving strong direct product theorems.
\newblock {\em Computational Complexity}, 12(1-2):1--22, 2003.

\bibitem{Sherstov18}
Alexander Sherstov.
\newblock The power of asymmetry in constant-depth circuits.
\newblock {\em SIAM Journal on Computing}, 47(6):2362--2434, 2018.

\end{thebibliography}

\end{document}